\newcommand\swap{\mathtt{SWAP}}
\newcommand\cnot{\mathtt{CNOT}}
\newcommand\PauliX{Pauli-$\mathtt X$ }
\newcommand\PauliY{Pauli-$\mathtt Y$ }
\newcommand\PauliZ{Pauli-$\mathtt Z$ }
\newcommand\cx{\mathtt{CX}}
\newcommand\cz{\mathtt{CZ}}
\newcommand\XX{\mathtt{X}} 
\newcommand\YY{\mathtt{Y}} 
\newcommand\ZZ{\mathtt{Z}} 
\newcommand\II{\mathtt{I}} 
\newcommand\III{\mathrm{I}} 
\newcommand\HH{\mathtt{H}} 
\newcommand\hh{\mathrm{h}} 
\newcommand\PP{\mathtt{P}} 
\renewcommand\phi{\varphi}
\renewcommand\epsilon{\varepsilon}
\renewcommand\leq{\leqslant}
\newcommand\Z{\mathbb{Z}}
\newcommand\C{\mathbb{C}}
\newcommand\F{\mathbb{F}_{2}}
\newcommand\ee{\mathrm e}
\newcommand\ii{\mathrm i}
\newcommand\HS{\mathcal{H}} 
\newcommand\T{\mathrm T} 
\newcommand\symg[1][n]{\mathfrak{S}_{#1}}  
\newcommand\PG[1][n]{\mathcal{E}_{#1}} 
\newcommand\cnotg[1][n]{{\left\langle\cnot\right\rangle}_{#1}} 
\newcommand\czg[1][n]{{\left\langle\cz\right\rangle}_{#1}} 
\newcommand\BG[1][n]{\mathcal{B}_{#1}} 
\newcommand\czxpg[1][n]{\left\langle\PP,\cz,\cnot\right\rangle_{#1}} 
\newcommand\czpg[1][n]{\left\langle\PP,\cz\right\rangle_{#1}} 
\newcommand\SL[1][n]{\mathrm{SL}_{#1}(\mathbb{F}_2)} 
\newcommand\GL[1][n]{\mathrm{GL}_{#1}(\mathbb{F}_2)} 
\newcommand\UG[1][2n]{\mathcal{U}_{2^n}} 
\newcommand\ctozs{\mathtt{C\text{-}to\text{-}ZS}}
\newcommand\ctopzx{\mathtt{C\text{-}to\text{-}PZX}}
\newcommand\ctonf{\mathtt{C\text{-}to\text{-}NF}}
\newcommand\ctorednf{\mathtt{C\text{-}to\text{-}CZredNF}}
\newtheorem{example}{Example}
\newtheorem{theo}[example]{Theorem}
\newtheorem{prop}[example]{Proposition}
\newtheorem{lem}[example]{Lemma}
\newtheorem{defi}[example]{Definition}
\newtheorem{rem}[example]{Remark}
\newtheorem{conj}[example]{Conjecture}
\begin{document}
\setlength\parindent{0mm}

\overfullrule=0mm
\floatstyle{boxed} 
\restylefloat{figure}

\title{Reducing stabilizer circuits without the symplectic group}

\author{Marc Bataille \\ marc.bataille1@univ-rouen.fr \\
  \\ LITIS laboratory, Universit\'e Rouen-Normandie \thanks{685 Avenue de l'Universit\'e, 76800 Saint-\'Etienne-du-Rouvray. France.}}

\date{}

\maketitle

\begin{abstract}
  We start by studying  the subgroup structures  underlying stabilizer circuits. Then we apply our results to provide two normal forms for stabilizer circuits. These forms are computed by induction using simple conjugation rules in the Clifford group and our algorithms do not rely on a special decomposition in the symplectic group. The first normal form has shape  CX-CZ-P-Z-X-H-CZ-P-H, where CX (resp. CZ) denotes a layer of $\cnot$ (resp. $\cz$) gates, P a layer of phase gates, X (resp. Z) a layer of
  \PauliX (resp. \PauliZ) gates. Then we replace most of the $\cz$ gates by $\cnot$ gates to obtain a second normal form of type
  P-CX-CZ-CX-Z-X-H-CZ-CX-P-H. In this second form, both $\cz$ layers have depth 1 and together contain therefore at most $n\ \cz$ gates.
  We also consider normal forms for stabilizer states and graph states. Finally we carry out a few tests on classical and quantum computers in order to show experimentally the utility of these normal forms to reduce the gate count of a stabilizer circuit.
  \end{abstract}

\section{Introduction}
In Quantum Computation, any unitary operation can be approximated to arbitrary accuracy using $\cnot$ gates together with Hadamard, Phase, and  $\pi/8$  gates (see Figure \ref{univers} for a definition of these gates and \cite[Section 4.5.3]{2011NC} for a proof of this result). Therefore, this set of gates is often called the standard set of universal gates. When we restrict this set to Hadamard, Phase and $\cnot$ gates, we obtain the set of Clifford gates. The Pauli group $\PG$ is the group generated by the Pauli gates acting on $n$ qubits (see Figure \ref{Pauli}) and the normalizer of the Pauli Group in the unitary group $\UG$ is called the Clifford group.
In his PhD thesis \cite[Section 5.8]{1997G}, Gottesman gave a constructive proof of the fact that the Clifford gates generate the Clifford group, up to a global phase. He also introduced the stabilizer formalism \cite[Section 10.5.1]{2011NC}, which turned out to be is a very efficient tool to analyze quantum error-correction codes \cite{1997G} and, more generally,  to describe unitary dynamics \cite[Section 10.5.2]{2011NC}. Indeed, the Gottesman-Knill theorem asserts that a stabilizer circuit (\textit{i.e.} a quantum circuit consisting only of Clifford gates) can be simulated efficiently on a classical computer (see \cite[Section 10.5.4]{2011NC} and \cite[p. 52]{1997G}).

Due to their importance in many fields of Quantum Computation, several  normal forms for stabilizer circuits were proposed over the last two decades, with the
aim of  reducing the gate count in these circuits.
The first normal form proposed by Aaronson and Gottesman \cite{2004AG} was successively improved by Maslov and Roetteler \cite{2018MR}, Bravyi and Maslov \cite{2020BM} and
Duncan \textit{et al.} \cite{2020DKPV}. These authors use decomposition methods in the symplectic group over $\F$ in dimension $2n$ \cite{2004AG,2018MR,2020BM} or
ZX-calculus \cite{2020DKPV} in order to compute a normal form. In this paper we provide two normal forms for stabilizer circuits. The first form is similar to the most recent ones \cite{2020BM,2020DKPV} but is computed with a different method that does not rely on the properties of the symplectic matrices. We use and induction process based on conjugation rules in the Clifford group. The second form is obtained from the first one by replacing most of the $\cz$ gates by $\cnot$ gates. This allows to reduce the 2-qubit gate count by using an algorithm proposed in 2004 by Patel \textit{et al.} \cite{2004PMH}.

This article is structured  as follows. Section \ref{background} is a background section on quantum circuits and Clifford gates that will guide the non-specialist reader through the rest of the paper. In Section \ref{groups} we investigate some remarkable subgroups of the Clifford Group. The precise description of these group structures allows us  to propose, in  Section \ref{NF}, a polynomial-time algorithm to compute two types of normal forms. In Section \ref{application} we apply our results to stabilizer states and graph states. Finally,  in section \ref{experience}, we use a C implementation of our algorithms to provide a few statistics which empirically show the interest of these normal forms to reduce the gate count of stabilizer circuits. We also propose an efficient implementation of graph states in the publicly available IBM quantum computers.

\section{Quantum circuits and Clifford gates\label{background}}
In this background section we recall classical notions about quantum circuits and Clifford gates and we also introduce the main notations used in the paper.
In Quantum Information Theory, a qubit is a quantum state that represents the basic information
storage unit. This state is described by a ket vector in the Dirac notation $\ket{\psi} = a_0 \ket{0} + a_1\ket{1}$ where $a_0$ and $a_1$ are complex numbers
such that $|a_0|^2 + |a_1|^2= 1$. The value of $|a_i|^2$ represents the probability that measurement produces
the value $i$. The states $\ket{0}$ and $\ket{1}$ form a basis of the Hilbert space $\HS\simeq \C^2$ where a one qubit quantum system evolves.
Operations on qubits must preserve the norm and are therefore described by unitary operators $U$ in the unitary group $\UG$.
In quantum computation, these operations are represented by quantum gates and a quantum circuit is a conventional representation of the sequence  of quantum gates applied to the qubit register over time. In Figure \ref{Pauli}, we recall the definition of the Pauli gates mentioned in the introduction. Notice that the states $\ket{0}$
and $\ket{1}$ are eigenvectors of the \PauliZ operator respectively associated to the eigenvalues 1 and -1, so the standard computational basis $(\ket{0},\ket{1})$ is also called the $\ZZ$-basis. Notice also that $\XX\ket{0}=1$ and $\XX\ket{1}=0$, hence the \PauliX gate is called the $\mathtt{NOT}$ gate. The phase gate $\PP$ (see Figure \ref{univers}) is defined by $\PP\ket{0}=\ket{0}$ and $\PP\ket{1}=\ii\ket{1}$. The Hadamard gate $\HH$ creates superposition since $\HH\ket{0}=\frac{1}{\sqrt2}(\ket{0}+\ket{1})$. The following useful identities are obtained by direct computation.
\begin{align}
  &\HH^2=\XX^2=\YY^2=\ZZ^2=\II\label{involutions}\\
  &\XX\ZZ=-\ZZ\XX\label{anticom}\\
  &\YY=\ii\XX\ZZ\label{yixz}\\
  &\HH\ZZ\HH=\XX\label{conj-z-h}\\
  &\PP^2=\ZZ\\
  &\PP\XX\PP^{-1}=\YY\label{conj-x-p}
\end{align}
The Pauli group for one qubit is the group generated by the set $\{\XX,\YY,\ZZ\}$. Any element of this group can be written uniquely in the form $\ii^{\lambda}\XX^a\ZZ^b$,
where $\lambda\in\Z_4$ and $a,b\in\F$.
\begin{figure}[h]
	\begin{center}
	\begin{tikzpicture}[scale=1.500000,x=1pt,y=1pt]
\filldraw[color=white] (0.000000, -7.500000) rectangle (24.000000, 7.500000);
\draw[color=black] (0.000000,0.000000) -- (24.000000,0.000000);
\draw[color=black] (0.000000,0.000000) node[left] {\PauliX\ \ };
\draw (45.00, 0.00) node {$\left[\begin{array}{cc}0&1\\1&0\end{array}\right]$};
\begin{scope}
\draw[fill=white] (12.000000, -0.000000) +(-45.000000:8.485281pt and 8.485281pt) -- +(45.000000:8.485281pt and 8.485281pt) -- +(135.000000:8.485281pt and 8.485281pt) -- +(225.000000:8.485281pt and 8.485281pt) -- cycle;
\clip (12.000000, -0.000000) +(-45.000000:8.485281pt and 8.485281pt) -- +(45.000000:8.485281pt and 8.485281pt) -- +(135.000000:8.485281pt and 8.485281pt) -- +(225.000000:8.485281pt and 8.485281pt) -- cycle;
\draw (12.000000, -0.000000) node {$X$};
\end{scope}

\end{tikzpicture}\qquad\qquad
\begin{tikzpicture}[scale=1.500000,x=1pt,y=1pt]
\filldraw[color=white] (0.000000, -7.500000) rectangle (24.000000, 7.500000);
\draw[color=black] (0.000000,0.000000) -- (24.000000,0.000000);
\draw[color=black] (0.000000,0.000000) node[left] {\PauliY\ \ };
\draw (45.00, 0.00) node {$\left[\begin{array}{cc}0&-i\\i&0\end{array}\right]$};
\begin{scope}
\draw[fill=white] (12.000000, -0.000000) +(-45.000000:8.485281pt and 8.485281pt) -- +(45.000000:8.485281pt and 8.485281pt) -- +(135.000000:8.485281pt and 8.485281pt) -- +(225.000000:8.485281pt and 8.485281pt) -- cycle;
\clip (12.000000, -0.000000) +(-45.000000:8.485281pt and 8.485281pt) -- +(45.000000:8.485281pt and 8.485281pt) -- +(135.000000:8.485281pt and 8.485281pt) -- +(225.000000:8.485281pt and 8.485281pt) -- cycle;
\draw (12.000000, -0.000000) node {$Y$};
\end{scope}
\end{tikzpicture}
\begin{tikzpicture}[scale=1.500000,x=1pt,y=1pt]
\filldraw[color=white] (0.000000, -7.500000) rectangle (24.000000, 7.500000);
\draw[color=black] (0.000000,0.000000) -- (24.000000,0.000000);
\draw[color=black] (0.000000,0.000000) node[left] {\PauliZ\ \ };
\draw (45.00, 0.00) node {$\left[\begin{array}{cc}1&0\\0&-1\end{array}\right]$};
\begin{scope}
\draw[fill=white] (12.000000, -0.000000) +(-45.000000:8.485281pt and 8.485281pt) -- +(45.000000:8.485281pt and 8.485281pt) -- +(135.000000:8.485281pt and 8.485281pt) -- +(225.000000:8.485281pt and 8.485281pt) -- cycle;
\clip (12.000000, -0.000000) +(-45.000000:8.485281pt and 8.485281pt) -- +(45.000000:8.485281pt and 8.485281pt) -- +(135.000000:8.485281pt and 8.485281pt) -- +(225.000000:8.485281pt and 8.485281pt) -- cycle;
\draw (12.000000, -0.000000) node {$Z$};
\end{scope}

\end{tikzpicture}\vspace{-4mm}

{ \caption{ The Pauli gates \label{Pauli}}}
\end{center}
\end{figure}

A quantum system of two qubits $A$ and $B$ (also called  a two-qubit register) lives in a 4-dimensional Hilbert space $\HS_A\otimes\HS_B$ and the computational basis of this space is $(\ket{00}=\ket{0}_A\otimes\ket{0}_B,\ket{01}=\ket{0}_A\otimes\ket{1}_B,\ket{10}=\ket{1}_A\otimes\ket{0}_B,\ket{11}=\ket{1}_A\otimes\ket{1}_B)$.
If $U$ is any unitary operator acting on one qubit, a controlled-$U$ gate acts on the Hilbert space $\HS_{A}\otimes\HS_{B}$ as follows.
One of the two qubits (say qubit $A$) is the control qubit whereas the other qubit is the target qubit. If the control qubit $A$ is in the state $\ket 1$ then $U$ is applied on the target qubit $B$ but when qubit $A$ is in the state $\ket{0}$ nothing is done on qubit $B$.
The $\cnot$ gate (or $\cx$ gate) is the controlled-$\XX$ gate with control on qubit $A$ and target on qubit $B$, so the action of $\cnot$ on a two-qubit register is described by :
$\cnot\ket{00}=\ket{00}, \cnot\ket{01}=\ket{01}, \cnot\ket{10}=\ket{11}, \cnot\ket{11}=\ket{10}$ (the corresponding matrix is given in Figure \ref{univers}).
Observe that this action can be sum up by the simple formula   $\cnot\ket{xy}=\ket{x,x \oplus y}$,
where $\oplus$ denotes the XOR operator between two bits $x$ and $y$, which is also the addition in $\F$. In the same way, the reader can check that the controlled-$\ZZ$ operator  acts on a a basis vector as $\cz\ket{xy}=(-1)^{xy}\ket{xy}$.
Note that this action is invariant by switching the control and the target. 
The last 2-qubit gate we need is the $\swap$ gate defined by $\swap\ket{xy}=\ket{yx}$.
\begin{figure}[h]
\ \ CNOT :\raisebox{-7mm}{
 \begin{tikzpicture}[scale=1.500000,x=1pt,y=1pt]
\filldraw[color=white] (0.000000, -7.500000) rectangle (18.000000, 22.500000);
\draw[color=black] (0.000000,15.000000) -- (18.000000,15.000000);
\draw[color=black] (0.000000,15.000000) node[left] {$A$};
\draw[color=black] (0.000000,0.000000) -- (18.000000,0.000000);
\draw[color=black] (0.000000,0.000000) node[left] {$B$};
\draw (9.000000,15.000000) -- (9.000000,0.000000);
\begin{scope}
\draw[fill=white] (9.000000, 0.000000) circle(3.000000pt);
\clip (9.000000, 0.000000) circle(3.000000pt);
\draw (6.000000, 0.000000) -- (12.000000, 0.000000);
\draw (9.000000, -3.000000) -- (9.000000, 3.000000);
\end{scope}
\filldraw (9.000000, 15.000000) circle(1.500000pt);
\end{tikzpicture}
} 
\ $\cnot=\begin{bmatrix}
  1&0&0&0\\
  0&1&0&0\\
  0&0&0&1\\
  0&0&1&0  
\end{bmatrix}$\qquad
Phase :\raisebox{-3mm}{
\begin{tikzpicture}[scale=1.500000,x=1pt,y=1pt]
\filldraw[color=white] (0.000000, -7.500000) rectangle (24.000000, 7.500000);
\draw[color=black] (0.000000,0.000000) -- (24.000000,0.000000);
\begin{scope}
\draw[fill=white] (12.000000, -0.000000) +(-45.000000:8.485281pt and 8.485281pt) -- +(45.000000:8.485281pt and 8.485281pt) -- +(135.000000:8.485281pt and 8.485281pt) -- +(225.000000:8.485281pt and 8.485281pt) -- cycle;
\clip (12.000000, -0.000000) +(-45.000000:8.485281pt and 8.485281pt) -- +(45.000000:8.485281pt and 8.485281pt) -- +(135.000000:8.485281pt and 8.485281pt) -- +(225.000000:8.485281pt and 8.485281pt) -- cycle;
\draw (12.000000, -0.000000) node {$P$};
\end{scope}
\end{tikzpicture}
}\quad 
$\PP=\begin{bmatrix}1&0\\0&\ii\end{bmatrix}$

\raisebox{-3mm}{
\begin{tikzpicture}[scale=1.500000,x=1pt,y=1pt]
\filldraw[color=white] (0.000000, -7.500000) rectangle (24.000000, 7.500000);
\draw[color=black] (0.000000,0.000000) -- (24.000000,0.000000);
\draw[color=black] (0.000000,0.000000) node[left] {Hadamard\ :\ \ };
\begin{scope}
\draw[fill=white] (12.000000, -0.000000) +(-45.000000:8.485281pt and 8.485281pt) -- +(45.000000:8.485281pt and 8.485281pt) -- +(135.000000:8.485281pt and 8.485281pt) -- +(225.000000:8.485281pt and 8.485281pt) -- cycle;
\clip (12.000000, -0.000000) +(-45.000000:8.485281pt and 8.485281pt) -- +(45.000000:8.485281pt and 8.485281pt) -- +(135.000000:8.485281pt and 8.485281pt) -- +(225.000000:8.485281pt and 8.485281pt) -- cycle;
\draw (12.000000, -0.000000) node {$H$};
\end{scope}
\end{tikzpicture}
}\ \ 
$\HH=\frac{1}{\sqrt2}\begin{bmatrix}1&1\\1&-1\end{bmatrix}$
\hspace{10mm}
$\pi/8 : $\raisebox{-3mm}{
\begin{tikzpicture}[scale=1.500000,x=1pt,y=1pt]
\filldraw[color=white] (0.000000, -7.500000) rectangle (24.000000, 7.500000);
\draw[color=black] (0.000000,0.000000) -- (24.000000,0.000000);
\begin{scope}
\draw[fill=white] (12.000000, -0.000000) +(-45.000000:8.485281pt and 8.485281pt) -- +(45.000000:8.485281pt and 8.485281pt) -- +(135.000000:8.485281pt and 8.485281pt) -- +(225.000000:8.485281pt and 8.485281pt) -- cycle;
\clip (12.000000, -0.000000) +(-45.000000:8.485281pt and 8.485281pt) -- +(45.000000:8.485281pt and 8.485281pt) -- +(135.000000:8.485281pt and 8.485281pt) -- +(225.000000:8.485281pt and 8.485281pt) -- cycle;
\draw (12.000000, -0.000000) node {$T$};
\end{scope}
\end{tikzpicture}
}\quad
$\mathtt{T}=\begin{bmatrix}1&0\\0&\ee^{\frac{\ii\pi}{4}}\end{bmatrix}$

{ \caption{ The standard set of universal gates : names, circuit symbols and matrices. \label{univers}}}
\end{figure}

On a system of $n$ qubits, we label each qubit from 0 to $n-1$ thus following the usual convention. For coherence we also number
the lines and columns of a $n\times n$ matrix from 0 to $n-1$ and we consider that a permutation of the symmetric group $\symg$ is a bijection of $\{0,\cdots,n-1\}$. The $n$-qubit system evolves over time in the Hilbert space $\HS_0\otimes \HS_1\otimes\cdots\otimes \HS_{n-1}$ where $\HS_i$ is the Hilbert space of
qubit $i$.
In this space, a state vector of the standard computational basis is classically denoted by
$\ket{x}=\ket{x_0x_1\cdots x_{n-1}}$ where $x_i$ is in $\{0,1\}$. Let $(e_i)_{0\leq i< n}$ be the canonical basis of $\F^n$. It is convenient to identify the binary label  $x=x_0x_1\cdots x_{n-1}$
with the column vector  $[x_0,\dots,x_{n-1}]^{\T}=\sum_ix_ie_i$ of $\F^n$, which is also denoted by $x$. So the standard basis is $(\ket{x})_{x\in\F^n}$.
When we apply locally a single qubit gate $U$ to qubit $i$ of a $n$-qubit register, the corresponding action on the $n$-qubit system is that of the operator
$U_i=\II\otimes\cdots\otimes\II\otimes U\otimes\II\otimes\cdots\otimes\II=\II^{\otimes i}\otimes U\otimes\II^{\otimes n-i-1}$, 
where $\otimes$ is the Kronecker product of matrices and $\II$ the identity matrix in dimension 2.
As an example, if $n=4$, $\HH_1=\II\otimes\HH\otimes\II\otimes\II$ and $\HH_0\HH_3=\HH\otimes\II\otimes \II\otimes \HH$. We also use a $n$-bit column vector notation,
\textit{e.g.} $\HH_0\HH_3=\HH_{[1,0,0,1]^{\T}}$. Observe that, with this notation, one has $U_i=U_{e_i}$. When $U$ is an involution (\textit{i.e.} $U^2=\II$), the group generated by the $U_i$'s is isomorphic to $\F^n$, since it is an abelian 2-group. This is the case for the Pauli and Hadamard gates but not for the phase gates. For instance
$\HH_{[1,0,0,1]^{\T}}\HH_{[0,0,1,1]^{\T}}=\HH_{[1,0,0,1]^{\T}\oplus[0,0,1,1]^{\T}}=\HH_{[1,0,1,0]^{\T}}=\HH_0\HH_2$. Note that the action of $\ZZ_i$
on $\ket{x}=\ket{x_0\cdots x_{n-1}}$ is described by
$\ZZ_i\ket{x}=(-1)^{x_i}\ket{x}\label{zi}$. Hence, if $v=[v_0,\dots,v_{n-1}]^{\T}\in\F^n$, one has
\begin{equation}
\ZZ_v\ket{x}=(-1)^{v\cdot x}\ket{x},\label{zu}
\end{equation}
where $v\cdot x=\sum_iv_ix_i$. In the same way, $\PP_i\ket{x}=\ii^{x_i}\ket{x}$, hence
\begin{equation}
\PP_v\ket{x}=\ii^{v\cdot x}\ket{x}.\label{pu}
\end{equation} 
A $\cnot$ gate with target on qubit $i$ and control on qubit $j$ will be denoted $X_{ij}$. The reader will pay attention to the fact that our convention is the opposite of that generally used, where $\cnot_{ij}$ denotes a $\cnot$ gate with control on qubit $i$ and target on qubit $j$. The reason for this change will appear clearly in the proof of Theorem \ref{X-GL}. So, if $i<j$, the action of $X_{ij}$ and $X_{ji}$ on a basis vector $\ket{x}$ is given by
\begin{align}
&  X_{ij}\ket{x}=X_{ij}\ket{x_0\cdots x_i\cdots x_j\cdots x_{n-1}}=\ket{x_0\cdots x_i\oplus x_j\cdots x_j\cdots x_{n-1}}\ ,\label{xij}\\
&  X_{ji}\ket{x}=X_{ji}\ket{x_0\cdots x_i\cdots x_j\cdots x_{n-1}}=\ket{x_0\cdots x_i\cdots x_j\oplus x_i\cdots x_{n-1}}.\label{xji}
\end{align}
The $\cz$ (resp. $\swap$) gate between qubits $i$ and $j$ will be denoted by $Z_{ij}$ (resp. $S_{ij}$). Notice that $Z_{ij}=Z_{ji}$ and $S_{ij}=S_{ji}$. These gates are defined by
\begin{align}
  &Z_{ij}\ket{x_0\cdots x_{n-1}}=(-1)^{x_ix_j}\ket{x}\ ,\label{zij}\\
  &S_{ij}\ket{x_0\cdots x_i\cdots x_j\cdots x_{n-1}}=\ket{x_0\cdots x_j\cdots x_i\cdots x_{n-1}}.\label{sij}
\end{align}
Observe that the $X_{ij}$, $S_{ij}$ and $\XX_i$ gates are $2^n\times 2^n$ permutation matrices while the $Z_{ij}$ and $\ZZ_i$  gates are diagonal matrices with all diagonal entries equal to $1$ or $-1$. All these matrices are involutions. The $P_i$ gates are  also diagonal matrices but are not involutions since $\PP_i^2=\ZZ_i$.
We end this section by recalling 3 classical identities. They correspond to the circuit equivalences in Figure \ref{equi}. Each identity can be proved by checking that the action of its left hand side and of its right hand side on a basis vector $\ket{x}$ is the same.
\begin{align}
  X_{ij}&=\HH_i\HH_jX_{ji}\HH_i\HH_j\label{conj-xij-h}\\
  Z_{ij}&=\HH_iX_{ij}\HH_i=\HH_jX_{ji}\HH_j\label{zijxij}\\
  S_{ij}&=X_{ij}X_{ji}X_{ij}=X_{ji}X_{ij}X_{ji}\label{sijxij}
\end{align}
The Pauli group for $n$ qubits is the group generated by the set $\{\XX_i,\YY_i,\ZZ_i\mid i=0\dots n-1\}$. Since
Identities \eqref{involutions}, \eqref{anticom} and \eqref{yixz},  any element of this group can be uniquely written in the form $\ii^{\lambda}\XX_{u}\ZZ_{v}$, where $\lambda\in\Z_4$ and $u,v\in\F^n$. So, using \eqref{anticom}, the multiplication rule in the Pauli group is given by  
\begin{equation}\label{pauli-mult}
  \ii^{\lambda}X_{u}Z_{v}\ii^{\lambda'}X_{u'}Z_{v'}=\ii^{\lambda+\lambda'}(-1)^{u'\cdot v}\XX_{u\oplus u'}\ZZ_{v\oplus v'}.
\end{equation}
A stabilizer circuit for $n$ qubit is an element of the group generated by the set $\{\PP_i,\HH_i,\XX_{ij}\mid 0\leq i,j\leq n-1 \}$. This group contains the $S_{ij}$ and $Z_{ij}$ gates since \eqref{zijxij} and \eqref{sijxij}. It also contains the Pauli group, since $\ZZ_i=\PP_i^2$, $\XX_i=\HH_i\PP_i^2\HH_i$ and $\YY_i=\PP_i\XX_i\PP_i^{-1}=\PP_i\HH_i\PP_i^2\HH_i\PP_i^3$. 
In a stabilizer circuit, changes of the overall phase by a multiple of $\frac{\pi}{4}$ are possible since
\begin{equation}
(\HH_i\PP_i)^3=(\PP_i\HH_i)^3=\ee^{\ii\frac{\pi}{4}}\III.\label{hp3}
\end{equation}
This last equation can be proved by a direct computation.
\begin{figure}[h]
$\cnot$ : \ \raisebox{-5mm}{
\begin{tikzpicture}[scale=1.200000,x=1pt,y=1pt]
\filldraw[color=white] (0.000000, -7.500000) rectangle (111.000000, 22.500000);
\draw[color=black] (0.000000,15.000000) -- (111.000000,15.000000);
\draw[color=black] (0.000000,15.000000) node[left] {$A$};
\draw[color=black] (0.000000,0.000000) -- (111.000000,0.000000);
\draw[color=black] (0.000000,0.000000) node[left] {$B$};
\draw (9.000000,15.000000) -- (9.000000,0.000000);
\begin{scope}
\draw[fill=white] (9.000000, 0.000000) circle(3.000000pt);
\clip (9.000000, 0.000000) circle(3.000000pt);
\draw (6.000000, 0.000000) -- (12.000000, 0.000000);
\draw (9.000000, -3.000000) -- (9.000000, 3.000000);
\end{scope}
\filldraw (9.000000, 15.000000) circle(1.500000pt);
\draw[fill=white,color=white] (24.000000, -6.000000) rectangle (39.000000, 21.000000);
\draw (31.500000, 7.500000) node {$\sim$};
\begin{scope}
\draw[fill=white] (57.000000, 15.000000) +(-45.000000:8.485281pt and 8.485281pt) -- +(45.000000:8.485281pt and 8.485281pt) -- +(135.000000:8.485281pt and 8.485281pt) -- +(225.000000:8.485281pt and 8.485281pt) -- cycle;
\clip (57.000000, 15.000000) +(-45.000000:8.485281pt and 8.485281pt) -- +(45.000000:8.485281pt and 8.485281pt) -- +(135.000000:8.485281pt and 8.485281pt) -- +(225.000000:8.485281pt and 8.485281pt) -- cycle;
\draw (57.000000, 15.000000) node {$H$};
\end{scope}
\begin{scope}
\draw[fill=white] (57.000000, -0.000000) +(-45.000000:8.485281pt and 8.485281pt) -- +(45.000000:8.485281pt and 8.485281pt) -- +(135.000000:8.485281pt and 8.485281pt) -- +(225.000000:8.485281pt and 8.485281pt) -- cycle;
\clip (57.000000, -0.000000) +(-45.000000:8.485281pt and 8.485281pt) -- +(45.000000:8.485281pt and 8.485281pt) -- +(135.000000:8.485281pt and 8.485281pt) -- +(225.000000:8.485281pt and 8.485281pt) -- cycle;
\draw (57.000000, -0.000000) node {$H$};
\end{scope}
\draw (78.000000,15.000000) -- (78.000000,0.000000);
\begin{scope}
\draw[fill=white] (78.000000, 15.000000) circle(3.000000pt);
\clip (78.000000, 15.000000) circle(3.000000pt);
\draw (75.000000, 15.000000) -- (81.000000, 15.000000);
\draw (78.000000, 12.000000) -- (78.000000, 18.000000);
\end{scope}
\filldraw (78.000000, 0.000000) circle(1.500000pt);
\begin{scope}
\draw[fill=white] (99.000000, 15.000000) +(-45.000000:8.485281pt and 8.485281pt) -- +(45.000000:8.485281pt and 8.485281pt) -- +(135.000000:8.485281pt and 8.485281pt) -- +(225.000000:8.485281pt and 8.485281pt) -- cycle;
\clip (99.000000, 15.000000) +(-45.000000:8.485281pt and 8.485281pt) -- +(45.000000:8.485281pt and 8.485281pt) -- +(135.000000:8.485281pt and 8.485281pt) -- +(225.000000:8.485281pt and 8.485281pt) -- cycle;
\draw (99.000000, 15.000000) node {$H$};
\end{scope}
\begin{scope}
\draw[fill=white] (99.000000, -0.000000) +(-45.000000:8.485281pt and 8.485281pt) -- +(45.000000:8.485281pt and 8.485281pt) -- +(135.000000:8.485281pt and 8.485281pt) -- +(225.000000:8.485281pt and 8.485281pt) -- cycle;
\clip (99.000000, -0.000000) +(-45.000000:8.485281pt and 8.485281pt) -- +(45.000000:8.485281pt and 8.485281pt) -- +(135.000000:8.485281pt and 8.485281pt) -- +(225.000000:8.485281pt and 8.485281pt) -- cycle;
\draw (99.000000, -0.000000) node {$H$};
\end{scope}
\end{tikzpicture}
}

$\cz$ :\ \raisebox{-5mm}{
\begin{tikzpicture}[scale=1.200000,x=1pt,y=1pt]
\filldraw[color=white] (0.000000, -7.500000) rectangle (204.000000, 22.500000);
\draw[color=black] (0.000000,15.000000) -- (204.000000,15.000000);
\draw[color=black] (0.000000,15.000000) node[left] {$A$};
\draw[color=black] (0.000000,0.000000) -- (204.000000,0.000000);
\draw[color=black] (0.000000,0.000000) node[left] {$B$};
\draw (9.000000,15.000000) -- (9.000000,0.000000);
\filldraw (9.000000, 15.000000) circle(1.500000pt);
\filldraw (9.000000, 0.000000) circle(1.500000pt);
\draw[fill=white,color=white] (24.000000, -6.000000) rectangle (39.000000, 21.000000);
\draw (31.500000, 7.500000) node {$\sim$};
\begin{scope}
\draw[fill=white] (57.000000, 15.000000) +(-45.000000:8.485281pt and 8.485281pt) -- +(45.000000:8.485281pt and 8.485281pt) -- +(135.000000:8.485281pt and 8.485281pt) -- +(225.000000:8.485281pt and 8.485281pt) -- cycle;
\clip (57.000000, 15.000000) +(-45.000000:8.485281pt and 8.485281pt) -- +(45.000000:8.485281pt and 8.485281pt) -- +(135.000000:8.485281pt and 8.485281pt) -- +(225.000000:8.485281pt and 8.485281pt) -- cycle;
\draw (57.000000, 15.000000) node {$H$};
\end{scope}
\draw (78.000000,15.000000) -- (78.000000,0.000000);
\begin{scope}
\draw[fill=white] (78.000000, 15.000000) circle(3.000000pt);
\clip (78.000000, 15.000000) circle(3.000000pt);
\draw (75.000000, 15.000000) -- (81.000000, 15.000000);
\draw (78.000000, 12.000000) -- (78.000000, 18.000000);
\end{scope}
\filldraw (78.000000, 0.000000) circle(1.500000pt);
\begin{scope}
\draw[fill=white] (99.000000, 15.000000) +(-45.000000:8.485281pt and 8.485281pt) -- +(45.000000:8.485281pt and 8.485281pt) -- +(135.000000:8.485281pt and 8.485281pt) -- +(225.000000:8.485281pt and 8.485281pt) -- cycle;
\clip (99.000000, 15.000000) +(-45.000000:8.485281pt and 8.485281pt) -- +(45.000000:8.485281pt and 8.485281pt) -- +(135.000000:8.485281pt and 8.485281pt) -- +(225.000000:8.485281pt and 8.485281pt) -- cycle;
\draw (99.000000, 15.000000) node {$H$};
\end{scope}
\draw[fill=white,color=white] (117.000000, -6.000000) rectangle (132.000000, 21.000000);
\draw (124.500000, 7.500000) node {$\sim$};
\begin{scope}
\draw[fill=white] (150.000000, -0.000000) +(-45.000000:8.485281pt and 8.485281pt) -- +(45.000000:8.485281pt and 8.485281pt) -- +(135.000000:8.485281pt and 8.485281pt) -- +(225.000000:8.485281pt and 8.485281pt) -- cycle;
\clip (150.000000, -0.000000) +(-45.000000:8.485281pt and 8.485281pt) -- +(45.000000:8.485281pt and 8.485281pt) -- +(135.000000:8.485281pt and 8.485281pt) -- +(225.000000:8.485281pt and 8.485281pt) -- cycle;
\draw (150.000000, -0.000000) node {$H$};
\end{scope}
\draw (171.000000,15.000000) -- (171.000000,0.000000);
\begin{scope}
\draw[fill=white] (171.000000, 0.000000) circle(3.000000pt);
\clip (171.000000, 0.000000) circle(3.000000pt);
\draw (168.000000, 0.000000) -- (174.000000, 0.000000);
\draw (171.000000, -3.000000) -- (171.000000, 3.000000);
\end{scope}
\filldraw (171.000000, 15.000000) circle(1.500000pt);
\begin{scope}
\draw[fill=white] (192.000000, -0.000000) +(-45.000000:8.485281pt and 8.485281pt) -- +(45.000000:8.485281pt and 8.485281pt) -- +(135.000000:8.485281pt and 8.485281pt) -- +(225.000000:8.485281pt and 8.485281pt) -- cycle;
\clip (192.000000, -0.000000) +(-45.000000:8.485281pt and 8.485281pt) -- +(45.000000:8.485281pt and 8.485281pt) -- +(135.000000:8.485281pt and 8.485281pt) -- +(225.000000:8.485281pt and 8.485281pt) -- cycle;
\draw (192.000000, -0.000000) node {$H$};
\end{scope}
\end{tikzpicture}
}

$\swap$ : \ \raisebox{-5mm}{
      \begin{tikzpicture}[scale=1.200000,x=1pt,y=1pt]
\filldraw[color=white] (0.000000, -7.500000) rectangle (180.000000, 22.500000);
\draw[color=black] (0.000000,15.000000) -- (180.000000,15.000000);
\draw[color=black] (0.000000,15.000000) node[left] {$A$};
\draw[color=black] (0.000000,0.000000) -- (180.000000,0.000000);
\draw[color=black] (0.000000,0.000000) node[left] {$B$};
\draw (9.000000,15.000000) -- (9.000000,0.000000);
\begin{scope}
\draw (6.878680, 12.878680) -- (11.121320, 17.121320);
\draw (6.878680, 17.121320) -- (11.121320, 12.878680);
\end{scope}
\begin{scope}
\draw (6.878680, -2.121320) -- (11.121320, 2.121320);
\draw (6.878680, 2.121320) -- (11.121320, -2.121320);
\end{scope}
\draw[fill=white,color=white] (24.000000, -6.000000) rectangle (39.000000, 21.000000);
\draw (31.500000, 7.500000) node {$\sim$};
\draw (54.000000,15.000000) -- (54.000000,0.000000);
\begin{scope}
\draw[fill=white] (54.000000, 0.000000) circle(3.000000pt);
\clip (54.000000, 0.000000) circle(3.000000pt);
\draw (51.000000, 0.000000) -- (57.000000, 0.000000);
\draw (54.000000, -3.000000) -- (54.000000, 3.000000);
\end{scope}
\filldraw (54.000000, 15.000000) circle(1.500000pt);
\draw (72.000000,15.000000) -- (72.000000,0.000000);
\begin{scope}
\draw[fill=white] (72.000000, 15.000000) circle(3.000000pt);
\clip (72.000000, 15.000000) circle(3.000000pt);
\draw (69.000000, 15.000000) -- (75.000000, 15.000000);
\draw (72.000000, 12.000000) -- (72.000000, 18.000000);
\end{scope}
\filldraw (72.000000, 0.000000) circle(1.500000pt);
\draw (90.000000,15.000000) -- (90.000000,0.000000);
\begin{scope}
\draw[fill=white] (90.000000, 0.000000) circle(3.000000pt);
\clip (90.000000, 0.000000) circle(3.000000pt);
\draw (87.000000, 0.000000) -- (93.000000, 0.000000);
\draw (90.000000, -3.000000) -- (90.000000, 3.000000);
\end{scope}
\filldraw (90.000000, 15.000000) circle(1.500000pt);
\draw[fill=white,color=white] (105.000000, -6.000000) rectangle (120.000000, 21.000000);
\draw (112.500000, 7.500000) node {$\sim$};
\draw (135.000000,15.000000) -- (135.000000,0.000000);
\begin{scope}
\draw[fill=white] (135.000000, 15.000000) circle(3.000000pt);
\clip (135.000000, 15.000000) circle(3.000000pt);
\draw (132.000000, 15.000000) -- (138.000000, 15.000000);
\draw (135.000000, 12.000000) -- (135.000000, 18.000000);
\end{scope}
\filldraw (135.000000, 0.000000) circle(1.500000pt);
\draw (153.000000,15.000000) -- (153.000000,0.000000);
\begin{scope}
\draw[fill=white] (153.000000, 0.000000) circle(3.000000pt);
\clip (153.000000, 0.000000) circle(3.000000pt);
\draw (150.000000, 0.000000) -- (156.000000, 0.000000);
\draw (153.000000, -3.000000) -- (153.000000, 3.000000);
\end{scope}
\filldraw (153.000000, 15.000000) circle(1.500000pt);
\draw (171.000000,15.000000) -- (171.000000,0.000000);
\begin{scope}
\draw[fill=white] (171.000000, 15.000000) circle(3.000000pt);
\clip (171.000000, 15.000000) circle(3.000000pt);
\draw (168.000000, 15.000000) -- (174.000000, 15.000000);
\draw (171.000000, 12.000000) -- (171.000000, 18.000000);
\end{scope}
\filldraw (171.000000, 0.000000) circle(1.500000pt);
\end{tikzpicture}
}
{ \caption{ Classical equivalences of circuits involving $\cnot$ and Hadamard gates.\label{equi}}}
\end{figure}

\section{Subgroup structures underlying stabilizer circuits\label{groups}}
We start by describing the group $\czg$ which is the group
generated by the $Z_{ij}$ gates. The set of parts of $\{\{i,j\}\mid 0 \leq i<j\leq n-1\}$ is denoted by $\BG$.
As noticed in Section \ref{background}, the  matrices $Z_{ij}$ are involutions and commute with each other since they are diagonal matrices.
So $\czg$ is isomorphic to the abelian 2-group $(\BG,\oplus)$, where $\oplus$ denotes the symmetric difference between two parts of a set (\textit{i.e.} their union minus their intersection). As a consequence, the order of $\czg$ si $2^{\frac{n(n-1)}{2}}$.
For any $B$ in $\BG$, we denote by $Z_B$ the unitary operator in $\czg$ associated to $B$, \textit{i.e.} $Z_B=\prod_{\{i,j\}\in B}Z_{ij}$.
With this notation, Identity \eqref{zij} can be generalized as
\begin{equation}
Z_B\ket{x}=(-1)^{\sum_{\{i,j\}\in B}x_ix_j}\ket{b}.\label{zB}
\end{equation}
To any $B$ in $\BG$ we associate a $n\times n$ $\F$-matrix whose entry $(i,j)$
is 1 if $\{i,j\}$ is in $B$. This matrix is symmetric with only zeros on the diagonal and,  for convenience, we also denote it by $B$. For example $\{\{i,j\}\}$ also denotes the matrix whose entries are all 0
but entries $(i,j)$ and $(j,i)$ that are 1. Depending on the context, it will be clear if we consider the set or the matrix.    Let $q_B$ be the quadratic form defined on $\F^n$ by
\begin{equation}
  q_B(x)=\sum_{\{i,j\}\in B}x_ix_j=\sum_{i<j}B_{ij}x_ix_j,
  \end{equation}
  where $B_{ij}$ is the entry $(i,j)$ of the matrix $B$. Then Identity \eqref{zB} can be rewritten as
\begin{equation}
  Z_B\ket{x}=(-1)^{q_{B}(x)}\ket{x}.
\end{equation}
Note that $B$ can be viewed as the matrix of the alternating (and symmetric) bilinear form associated to the quadratic form $q_B$.\medskip

In a previous work \cite{2020B}, we described  the group $\cnotg$ generated by the $X_{ij}$ gates. We recall now some results of \cite{2020B}. The special linear group on any field $K$ is generated by the set of transvection matrices. In the special case of $K=\F$, this set is reduced to the $n(n-1)$ matrices $T_{ij}=I_n + E_{ij}$, where $E_{ij}$ is the matrix with all entries 0 except the entry $(i,j)$ that is 1.  So $\GL=\SL$ is generated by the matrices $T_{ij}$. The following simple property of the $T_{ij}$ matrices will be of great use.
    \begin{prop}\label{tij-mult}
Multiplying to the left (resp. the right) any matrix $M$ by a 
transvection matrix $T_{ij}$ is equivalent to add the row $j$ (resp. column $i$) to the row $i$ (resp. column $j$) in $M$.
\end{prop}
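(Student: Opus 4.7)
The plan is to proceed by direct computation, exploiting the decomposition $T_{ij}=I_n+E_{ij}$ together with the elementary identity $(E_{ij})_{k\ell}=\delta_{ki}\delta_{\ell j}$.

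For the left multiplication statement, I would first write
\[
T_{ij}M = (I_n+E_{ij})M = M + E_{ij}M,
\]
and then compute the entries of $E_{ij}M$ by $(E_{ij}M)_{k\ell}=\sum_m (E_{ij})_{km}M_{m\ell}=\delta_{ki}M_{j\ell}$. This shows that $E_{ij}M$ has its $i$-th row equal to the $j$-th row of $M$ and all other rows zero. Adding this to $M$ leaves every row except the $i$-th unchanged and replaces row $i$ by (row $i$) $+$ (row $j$), which is precisely the claimed elementary row operation.

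The right multiplication case is completely symmetric: writing $MT_{ij}=M+ME_{ij}$ and computing $(ME_{ij})_{k\ell}=\sum_m M_{km}(E_{ij})_{m\ell}=M_{ki}\delta_{\ell j}$, one sees that $ME_{ij}$ has its $j$-th column equal to the $i$-th column of $M$ and all other columns zero, so $MT_{ij}$ is obtained from $M$ by replacing column $j$ with (column $j$) $+$ (column $i$), leaving the remaining columns unchanged.

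There is no real obstacle here; this is a routine elementary linear algebra verification. The only point requiring attention is the convention on the position of the nonzero off-diagonal entry of $T_{ij}$: with $E_{ij}$ having its unique $1$ in position $(i,j)$, row $i$ of $E_{ij}M$ reads out row $j$ of $M$, which is why left multiplication affects row $i$ and not row $j$; symmetrically, column $j$ of $ME_{ij}$ reads out column $i$ of $M$. Working over $\F$ means that addition coincides with subtraction, so no sign bookkeeping is needed.
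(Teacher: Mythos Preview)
Your proof is correct and complete. The paper itself states this proposition without proof, treating it as a standard elementary fact; your direct computation via $T_{ij}=I_n+E_{ij}$ and entrywise expansion is exactly the natural verification and requires no further comment.
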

Applying Proposition \ref{tij-mult} to vector $b\in\F^n$, we can rewrite Relation \eqref{xij} as
\begin{equation}
 X_{ij}\ket{x}=\ket{T_{ij}x}.\label{xijtij}
 \end{equation}
 The above considerations lead quite naturally to the following theorem.

\begin{theo}\label{X-GL}
       
      The group $\cnotg$ generated by the $\cnot$ gates acting on $n$ qubits is isomorphic to $\GL$.
      The morphism $\Phi$ sending each gate $X_{ij}$ to the transvection matrix $T_{ij}$ is an explicit isomorphism.
      The order of $\cnotg$ is $2^{\frac{n(n-1)}{2}}\prod_{i=1}^n(2^i-1)$.
    \end{theo}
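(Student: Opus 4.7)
The plan is to build a group isomorphism $\Phi : \cnotg \to \GL[n]$ using the action of $\cnotg$ on the computational basis, and then read off the cardinality from the standard formula for $|\GL_n(\mathbb{F}_2)|$.

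First I would establish that every element $C \in \cnotg$ acts on the standard basis as $C\ket{b} = \ket{M_C b}$ for a unique matrix $M_C \in \GL[n]$. Relation \eqref{xijtij} gives this for the generators, and an induction on the length of a word in the $X_{ij}$'s yields the claim for products: if $C = X_{ij}C'$ and $C'\ket{b} = \ket{M_{C'}b}$, then $C\ket{b} = X_{ij}\ket{M_{C'}b} = \ket{T_{ij}M_{C'}b}$, so $M_C = T_{ij}M_{C'}$. Since the $\ket{b}$'s form a basis of $\HS$, the matrix $M_C$ is uniquely determined by $C$, so the map $\Phi : C \mapsto M_C$ is well-defined. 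The same computation shows $M_{CC'} = M_C M_{C'}$, i.e.\ $\Phi$ is a group homomorphism, and by construction $\Phi(X_{ij}) = T_{ij}$.

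Next I would verify that $\Phi$ is bijective. For surjectivity, I would invoke the fact recalled just before the theorem that the transvections $T_{ij}$ generate $\GL[n] = \SL[n]$; since every $T_{ij}$ is the image of $X_{ij}$, the image of $\Phi$ is all of $\GL[n]$. For injectivity, if $\Phi(C) = I_n$ then $C\ket{b} = \ket{b}$ for every $b \in \F^n$, and since $(\ket{b})_{b\in\F^n}$ is a basis of $\HS$, the operator $C$ is the identity of $\UG$. Hence $\ker\Phi$ is trivial and $\Phi$ is the announced isomorphism.

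Finally I would compute $|\cnotg| = |\GL[n]|$ by the classical column-by-column argument over $\F$: an invertible matrix is built by choosing an ordered basis of $\F^n$, so the $k$-th column can be any vector outside the span of the previous ones, giving $(2^n - 2^{k-1})$ choices for $k = 1, \dots, n$. Thus
\begin{equation*}
|\GL[n]| = \prod_{k=0}^{n-1}(2^n - 2^k) = \prod_{k=0}^{n-1} 2^k(2^{n-k}-1) = 2^{\frac{n(n-1)}{2}} \prod_{i=1}^{n}(2^i-1),
\end{equation*}
which is the stated order. No step looks like a serious obstacle; the only point demanding a little care is checking that $M_C$ is well-defined on $\cnotg$ (independently of the word representing $C$), which is handled automatically by defining $\Phi$ directly from the action on the basis rather than from a presentation by generators and relations.
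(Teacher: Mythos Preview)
Your proof is correct and follows essentially the same approach as the paper: surjectivity via the fact that the $T_{ij}$ generate $\GL$, injectivity via the observation that an element of $\cnotg$ is determined by its action $\ket b\mapsto\ket{Ab}$ on the computational basis, and the order via counting ordered bases of $\F^n$. Your version is slightly more careful in that you define $\Phi$ directly from the action on basis states (thereby making well-definedness automatic), whereas the paper defines $\Phi$ on generators and leaves well-definedness implicit.
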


    \begin{proof}
      Since the matrices $T_{ij}$ generate $\GL$, it is clear that  $\Phi$ is surjective. For Identity \eqref{xijtij}, a preimage $V$ under $\Phi$ of any matrix $A$ in $\GL$ must satisfy  the relations $V\ket{x}=\ket{Ax}$ for any basis vector $\ket{x}$. Since these relations define a unique matrix $V$, $\Phi$ is injective. The order of $\GL$ is classically obtained by counting the number of basis of the vector space $\F^n$.
\end{proof}

The following conjugation rule between the $\cnot$ gates can be proved by checking, thanks to Identity \eqref{xij}, that the action of its left hand side and of its right hand side on any basis vector $\ket{b}$ is the same.
\begin{equation}
  X_{ij}X_{jk}X_{ij}=X_{jk}X_{ik}\quad (i,j,k \text{ distinct})\label{conj-xij-xjk}
\end{equation}
Let $\czpg$ be the group generated by the set $\{\PP_i,Z_{ij}\}$. 
Any element of the group generated by the $\PP_i$ gates can be written uniquely in the form $\ZZ_v\PP_b$ where $v,b\in\F^n$.
This group is isomorphic to $(\Z_4^n,+)$, one possible isomorphism associating $\ZZ_v\PP_b$ to $2v+b$.
Hence the group $\czpg$ is isomorphic to $\Z_4^n\times\BG$. Any element in $\czpg$ can be written uniquely in the form $\ZZ_v\PP_bZ_B$ and 
\begin{equation}
  \ZZ_v\PP_bZ_B\ZZ_{v'}\PP_{b'}Z_{B'}=\ZZ_{v\oplus v'\oplus bb'}\PP_{b\oplus b'}Z_{B\oplus B'},\label{czpg-mult}
\end{equation}
where $bb'$ denotes the $\F^n$-vector $\sum_ib_ib_i'e_i$.
The conjugation by the $X_{ij}$ gates in $\czpg$ obey to the 7 rules below.
Each equality can be proved by checking, thanks to Identities \eqref{zu} to \eqref{zij}, that the action of its left hand side and of its right hand side on any basis vector $\ket{x}$ is the same.
\begin{align}
  &X_{ij}Z_{ij}X_{ij} = Z_{ij}\ZZ_j\label{conj-Zij}\\
  &X_{ij}Z_{ik}X_{ij} = Z_{ik}Z_{jk}\quad (i,j,k\ \text{distinct})\label{conj-Zik}\\
  &X_{ij}Z_{pq}X_{ij} = Z_{pq}\quad (p,q\neq i)\label{conj-Zpq}\\
  &X_{ij} \ZZ_{i}X_{ij}= \ZZ_i\ZZ_j\label{conj-Zi}\\
  &X_{ij} \ZZ_{j}X_{ij}= \ZZ_j\label{conj-Zj}\\
  &X_{ij} \PP_{i}X_{ij}= \PP_i\PP_jZ_{ij}\label{conj-Pi}\\
  &X_{ij} \PP_{j}X_{ij}= \PP_j\label{conj-Pj}
\end{align}

For any $A$ in $\GL$, let $X_A=\Phi^{-1}(A)$, where $\Phi$ is the morphism defined in Theorem \ref{X-GL}. So, if $[ij]$ denotes the transvection matrix $T_{ij}$, one has $X_{ij}=X_{[ij]}$. Let us denote by $\czxpg$ the group generated by the set $\{\PP_i,Z_{ij},X_{ij}\}$. As described in the following proposition, we can extend relations \eqref{conj-Zij} to \eqref{conj-Pj} to the unitary matrices $\ZZ_v,\PP_b$ or $Z_B$.

\begin{prop}\label{normal}
  The group $\czpg$ is a normal subgroup of $\czxpg$. The conjugation of any element of $\czpg$ by a $\cnot$ gate is described by the relations
\begin{align}
  & X_{[ij]} \ZZ_{v}X_{[ij]}= \ZZ_{[ji]v}\ ,\label{conj-Za-xij}\\
  &X_{[ij]} \PP_{b}X_{[ij]}= \ZZ_{b_ib_je_j}\PP_{[ji]b}Z_{b_i\{\{i,j\}\}}\ ,\label{conj-Pb-xij}\\
  &X_{[ij]}Z_{B}X_{[ij]} = \ZZ_{B_{ij}e_j}Z_{[ji]B[ij]}\ ,\label{conj-ZB-xij}\\
  &X_AZ_BX_A^{-1}=\ZZ_{q_{B}(A^{-1})}Z_{A^{-\T}BA^{-1}}\ ,\label{conj-ZB-XA}
 \end{align}
where $A^{-\T}$ is a shorthand for $\left(A^{\T}\right)^{-1}$, $q_B$ is the quadratic form defined by $B$ and $q_B(A)$ is a shorthand for the vector $[q_B(C_0),\cdots,q_B(C_{n-1})]^{\T}$ with $C_i$ being the column $i$ of $A$. 
\end{prop}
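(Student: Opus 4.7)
The normality of $\czpg$ in $\czxpg$ is almost immediate: since $\czxpg$ is generated by $\czpg$ together with the gates $X_{ij}$, and each $X_{ij}$ is an involution, it is enough to check that $X_{ij}\,\czpg\, X_{ij}\subseteq \czpg$. This is exactly what the elementary rules \eqref{conj-Zij}--\eqref{conj-Pj} say: each generator $\PP_k$ or $Z_{kl}$ of $\czpg$ is sent by conjugation by $X_{ij}$ to an explicit word in the $\PP_k$'s and $Z_{kl}$'s, hence $\czpg$ is stable under conjugation by every generator of $\czxpg$.

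For the three conjugation formulas \eqref{conj-Za-xij}, \eqref{conj-Pb-xij}, \eqref{conj-ZB-xij}, the plan is the same in each case: compute the action of both sides on an arbitrary basis vector $\ket{c}$ using \eqref{xijtij}, \eqref{zu}, \eqref{pu} and \eqref{zB}. For example, the left-hand side of \eqref{conj-Pb-xij} acts as $X_{[ij]}\PP_bX_{[ij]}\ket{c}=\ii^{\,b\cdot [ij]c}\ket{c}$. The subtlety is that $[ij]c$ is computed in $\F^n$, so its $i$-th entry is $c_i\oplus c_j$, whereas the scalar product in the exponent of $\ii$ must be read over $\Z$. Applying the integer identity $x\oplus y=x+y-2xy$ for $x,y\in\{0,1\}$ produces correction terms $-2xy$, and it is exactly these corrections that account for the extra $\ZZ$ or $Z_{\{\{ij\}\}}$ factors on the right-hand side. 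The same phenomenon, applied to the $\F$-polarization $q_B(c+c_je_i)=q_B(c)+c_j(Bc)_i$ of the quadratic form, yields the correction $\ZZ_{B_{ij}e_j}$ in \eqref{conj-ZB-xij}.

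For the general identity \eqref{conj-ZB-XA}, a direct computation with \eqref{xijtij} gives
\[
X_AZ_BX_A^{-1}\ket{c}=(-1)^{q_B(A^{-1}c)}\ket{c},
\]
so everything reduces to rewriting $q_B(A^{-1}c)$ as a quadratic form in $c$. Setting $D=A^{-1}$ and iterating the polarization identity $q_B(x+y)=q_B(x)+q_B(y)+x^{\T}By$ in $\F$ (where $B$ on the right denotes the symmetric bilinear form associated with $q_B$) leads to
\[
q_B(Dc)=\sum_kc_k\,q_B(D_k)+\sum_{k<l}c_kc_l\,D_k^{\T}BD_l=q_B(D)\cdot c+q_{D^{\T}BD}(c),
\]
where $D_k$ denotes the $k$-th column of $D$. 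A short check shows that $D^{\T}BD$ is symmetric with zero diagonal in $\F$, so it really represents an element of $\czg$. Matching this with the exponent produced by the right-hand side closes the proof.

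The main obstacle is purely bookkeeping: one must consistently distinguish $\F$-arithmetic (for vectors, matrices and the form $q_B$) from $\Z$-arithmetic (for exponents of $\ii$ and $-1$), and keep track of the polarization correction terms. Once that discipline is maintained, the whole proposition collapses to the two polarization identities above combined with the action formulas \eqref{zu}, \eqref{pu}, \eqref{zB} and \eqref{xijtij}.
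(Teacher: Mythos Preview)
Your argument is correct and, in fact, somewhat cleaner than the paper's. The paper derives \eqref{conj-Za-xij} and \eqref{conj-Pb-xij} from the elementary rules \eqref{conj-Zi}--\eqref{conj-Pj}, proves \eqref{conj-ZB-xij} by explicitly splitting $B$ into the three pieces $B_{ij}\{\{i,j\}\}$, $B_i'$ and $B_i^{C}$ and applying \eqref{conj-Zij}--\eqref{conj-Zpq} to each, and then obtains \eqref{conj-ZB-XA} in two steps: first using \eqref{conj-Za-xij} and \eqref{conj-ZB-xij} to know in advance that the result has the shape $\ZZ_aZ_{A^{-\T}BA^{-1}}$, then evaluating both sides on $\ket{e_i}$ to pin down $a_i=q_B(C_i)$. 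Your route is more uniform: you act directly on an arbitrary $\ket{c}$ and reduce everything to the single polarization identity $q_B(Dc)=q_B(D)\cdot c+q_{D^{\T}BD}(c)$, from which \eqref{conj-ZB-XA} is immediate and \eqref{conj-ZB-xij} drops out as the special case $D=[ij]$ (since $q_B([ij])=B_{ij}e_j$). What you gain is economy and the observation that \eqref{conj-ZB-xij} need not be proved separately; what the paper's approach buys is that it stays entirely within the elementary commutation rules already listed, so one never has to manage the $\F$ versus $\Z$ bookkeeping that your method requires for the $\PP_b$ case.
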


\begin{proof}
  Identities \eqref{conj-Za-xij} and \eqref{conj-Pb-xij} are direct consequences of the conjugation relations \eqref{conj-Zi}, \eqref{conj-Zj}, \eqref{conj-Pi}, \eqref{conj-Pj} and Proposition \eqref{tij-mult} applied to vectors $v$ and $b$.
  Let us prove Identity \eqref{conj-ZB-xij}.
  Let $B_i=\{\{p,q\}\in B\mid i\in\{p,q\}\}$, $B_i^{C}=B\oplus B_i$ and $B_i'=B_i\oplus B_{ij}\{\{i,j\}\}$, then $B=B_{ij}\{\{i,j\}\}\oplus B_i'\oplus B_i^C$.
  On one hand,   $[ji]B[ij]=B_{ij}[ji]\{\{i, j\}\}[ij]\oplus[ji]B_i'[ij]\oplus [ji]B_i^C[ij]$.
  We check that $[ji]\{\{i, j\}\}[ij] = \{\{i, j\}\}$,
  $[ji]\{\{i, k\}\}[ij] = \{\{i, k\}, \{j, k\}\}$ when $k\neq j$ and $[ji]\{\{p, q\}\}[ij] = \{\{p, q\}\}$ when $p,q\neq i$.
  Hence
  \begin{equation}\label{one-hand}
    Z_{[ji]B[ij]}=Z_{ij}^{B_{ij}}Z_{B_i^C}\prod_{k\in K_i}Z_{ik}Z_{jk},
    \end{equation}
  where $K_i=\{k \mid \{i,k\}\in B_i'\}$.
On the other hand, $X_{[ij]}Z_{B}X_{[ij]}=X_{[ij]}Z_{ij}^{B_{ij}}Z_{B_i'}Z_{B_i^C}X_{[ij]}$.
  So, using \eqref{conj-Zij}, \eqref{conj-Zik}
  and \eqref{conj-Zpq}, one has
  \begin{equation}\label{other-hand}
    X_{[ij]}Z_{B}X_{[ij]}=Z_{ij}^{B_{ij}}\ZZ_j^{B_{ij}}Z_{B_i^C}\prod_{k\in K_i}Z_{ik}Z_{jk}.
  \end{equation}
  As $\ZZ_j^{B_{ij}}=\ZZ_{B_{ij}e_j}$, we conclude by comparing \eqref{one-hand} and \eqref{other-hand}.
    Let us prove Identity \eqref{conj-ZB-XA}. Since \eqref{conj-ZB-xij} and \eqref{conj-Za-xij}, it is clear that $X_AZ_BX_A^{-1}$ can be written in the form
  $\ZZ_{v}Z_{A^{-\T}BA^{-1}}$ for some $v$ in $\F^n$, so we have to prove that $v=q_B(A^{-1})$. We start from $\ZZ_{v}=X_AZ_BX_A^{-1}Z_{B'}$, where
  $B'=A^{-\T}BA^{-1}$. Let $\ket{\psi}=\ZZ_{v}\ket{e_i}$, then $\ket{\psi}=(-1)^{v_i}\ket{e_i}$.
  On the other hand, $\ket{\psi}=X_AZ_BX_A^{-1}\ket{e_i}$ since $q_{B'}(e_i)=0$ for any $B'\in\BG$. Besides, $X_A^{-1}\ket{e_i}= \ket{A^{-1}e_i}=\ket{C_i}$ where $C_i$ is the  column $i$ of $A^{-1}$, hence
  $\ket{\psi}=X_AZ_B\ket{C_i}=(-1)^{q_B(C_i)}X_A\ket{C_i}=(-1)^{q_B(C_i)}\ket{e_i}$. Finally we see that $v_i= q_B(C_i)$, thus $v=q_B(A^{-1})$.
  \end{proof}

From Identity \eqref{sijxij}, the gate $S_{ij}$ is in $\cnotg$ and is therefore a $X_A$ gate. Let $(ij)$ be the permutation matrix of $\GL$ associated to the transposition $\tau$ of $\symg$ that swaps   $i$ and $j$, then $(ij)=[ij][ji][ij]=[ji][ij][ji]$ and  $S_{ij}=X_{(ij)}$. The group generated by the $S_{ij}$ gates is a subgroup of $\cnotg$ that is isomorphic to $\symg$. The conjugation by $S_{ij}$ is given by $S_{ij}Z_{pq}S_{ij}=Z_{\tau(p)\tau(q)}$ (see \cite{2019BL} for further development on $\cz$ and $\swap$ gates). As in Proposition \ref{normal}, we prove that
\begin{align}
  & X_{(ij)} \ZZ_{v}X_{(ij)}= \ZZ_{(ij)v}\ , \label{conj-Za-sij}\\
  &X_{(ij)} \PP_{b}X_{(ij)}= \PP_{(ij)b}\ ,\label{conj-Pb-sij}\\
  &X_{(ij)}Z_{B}X_{(ij)} = Z_{(ij)B(ij)}.\label{conj-ZB-sij}
\end{align}
    
The main results of this section are summarized by the theorem that follows.
\begin{theo}\label{decompo}
  Any element of $\czxpg$ admits a unique decomposition in the normal form $\ZZ_v\PP_bZ_BX_A$ where $v,b\in\F^n$, $B\in\BG$, $A\in\GL$.
  The group $\czxpg$ is the semidirect product of its normal subgroup $\czpg$ with $\cnotg$, \textit{i.e.} $\czxpg=\czpg\rtimes\cnotg$.
  The order of $\czxpg$ is therefore $2^{n(n+1)}\prod_{i=1}^n(2^i-1)$. 
\end{theo}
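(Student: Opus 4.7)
The plan is to verify the three claims in sequence: existence of the decomposition, uniqueness, and then to read off the semidirect product structure and the order. The starting observation is that Proposition~\ref{normal} already supplies the conjugation rules needed to push any $X_{ij}$ past any element of $\czpg$, so intuitively one can straighten every word in the generators $\{\PP_i,Z_{ij},X_{ij}\}$ into the announced shape.

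For existence, I would show that the set $S=\{\ZZ_a\PP_bZ_BX_A\}$ is already closed under multiplication; since $S$ visibly contains each generator (take three of the four parameters neutral), this forces $S=\czxpg$. Closure is the substantive step. Given two elements, I would rewrite their product as
\[(\ZZ_a\PP_bZ_BX_A)(\ZZ_{a'}\PP_{b'}Z_{B'}X_{A'})=(\ZZ_a\PP_bZ_B)\bigl(X_A\ZZ_{a'}\PP_{b'}Z_{B'}X_A^{-1}\bigr)(X_AX_{A'}).\]
The middle factor belongs to $\czpg$ by Proposition~\ref{normal}: identity~\eqref{conj-ZB-XA} handles $Z_{B'}$ directly, and the analogous extensions of \eqref{conj-Za-xij} and \eqref{conj-Pb-xij} from transvections to a general $A\in\GL$, obtained by iterating along a factorization $A=T_{i_1j_1}\cdots T_{i_rj_r}$, handle $\ZZ_{a'}$ and $\PP_{b'}$. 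Then \eqref{czpg-mult} collapses the first two blocks into a single $\ZZ_{a''}\PP_{b''}Z_{B''}$, while $X_AX_{A'}=X_{AA'}$ by Theorem~\ref{X-GL}, yielding an element of $S$.

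For uniqueness, I would suppose $\ZZ_a\PP_bZ_BX_A=\ZZ_{a'}\PP_{b'}Z_{B'}X_{A'}$ and rearrange to obtain an element of $\czpg\cap\cnotg$. The key observation is that every element of $\czpg$ is a diagonal matrix (the three factors $\ZZ_a$, $\PP_b$, $Z_B$ all are), while every element of $\cnotg$ is a permutation matrix (each $X_{ij}$ is one, and products of permutation matrices are permutation matrices). The only diagonal permutation matrix is $\III$, hence $\czpg\cap\cnotg=\{\III\}$. Uniqueness of the factorization $\ZZ_a\PP_bZ_B$ inside $\czpg\cong\Z_4^n\times\BG$ and uniqueness of $A$ from Theorem~\ref{X-GL} then force $(a,b,B,A)=(a',b',B',A')$.

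The semidirect product statement now restates what has been proved: $\czxpg=\czpg\cdot\cnotg$ with trivial intersection, and $\czpg$ is normalized by every transvection $X_{ij}$ by Proposition~\ref{normal}, hence by all of $\cnotg$, hence by all of $\czxpg=\langle\czpg,\cnotg\rangle$. For the order, $|\czpg|=|\Z_4^n|\cdot|\BG|=4^n\cdot 2^{n(n-1)/2}=2^{n(n+3)/2}$, and $|\cnotg|=2^{n(n-1)/2}\prod_{i=1}^n(2^i-1)$ by Theorem~\ref{X-GL}; their product has exponent $n(n+3)/2+n(n-1)/2=n(n+1)$, giving the announced formula. The main obstacle is really the bookkeeping in the closure step---extending the transvection-level conjugation rules to arbitrary $X_A$---after which everything else reduces to the clean ``diagonal versus permutation'' observation that pins down the trivial intersection.
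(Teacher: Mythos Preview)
Your proof is correct. The existence argument is essentially the paper's: both push the $\cnot$ part to the right using the conjugation rules of Proposition~\ref{normal} and collapse the diagonal part with \eqref{czpg-mult}; the paper packages this as an induction on the word length (Algorithm~$\cto$), you package it as closure of $S$ under multiplication, but the content is the same.

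Your uniqueness argument, however, differs from the paper's and is cleaner. The paper argues by evaluation: it first shows $A=A'$ by finding $u$ with $Au\neq A'u$ and noting that $\ket{Au}$ and $\ket{A'u}$ would then be distinct collinear basis vectors, and afterwards separates $a,b,B$ by applying both sides to $\ket{e_i}$ and $\ket{e_i\oplus e_j}$. You instead observe once and for all that every element of $\czpg$ is a diagonal matrix while every element of $\cnotg$ is a permutation matrix, so $\czpg\cap\cnotg=\{\III\}$; this immediately gives both $A=A'$ and the reduction to uniqueness inside $\czpg\cong\Z_4^n\times\BG$. The paper's approach has the advantage of being self-contained at the level of ket computations and reproves the uniqueness inside $\czpg$ along the way; yours has the advantage of isolating the structural reason (diagonal versus permutation) and making the semidirect-product claim an immediate corollary rather than a separate observation.
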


\begin{proof}
  The existence of the decomposition can be proved by induction using Proposition \ref{normal} and Identity \eqref{czpg-mult}. The algorithm in Figure \ref{C-to-PZX} gives the details of the induction step :
  let $C= \prod_{k=1}^{\ell}M_k$, where $M_k\in\{\PP_i,Z_{ij},X_{[ij]}\}$ be an element of $\czxpg$, then the form $\ZZ_v\PP_bZ_BX_A$ for $C$ is the result of
  Algorithm $\ctopzx$  applied to $C$ and $\III$, \textit{i.e.} $\ctopzx(C,\III)$. Observe that the time complexity of this algorithm is only  $O(n\ell)$ since we use row and column additions instead of matrix multiplication thanks to Proposition \ref{tij-mult}.
  Let us prove the unicity of such a decomposition.  Suppose that $\ZZ_v\PP_bZ_BX_A=\ZZ_{v'}\PP_{b'}Z_{B'}X_{A'}$. If $A\neq A'$, there exists $u\in\F^n$ such that $Au\neq A'u$. But this leads to a contradiction
  because $\ZZ_v\PP_bZ_BX_A\ket{u}=\ZZ_{v'}\PP_{b'}Z_{B'}X_{A'}\ket{u}$, so $\ZZ_v\PP_bZ_B\ket{Au}=\ZZ_{v'}\PP_{b'}Z_{B'}\ket{A'u}$, hence $\ket{Au}$ and $\ket{A'u}$ are two different basis vector that are collinear. So $A=A'$ and $\ZZ_v\PP_bZ_B=\ZZ_{v'}\PP_{b'}Z_{B'}$. If there exists $i$ such that $b_i=1$ and $b'_i=0$, then $\PP_b\ket{e_i}=\ii\ket{e_i}$   and $\PP_{b'}\ket{e_i}=\ket{e_i}$, so $\ii\ZZ_vZ_B\ket{e_i}=\ZZ_{v'}Z_{B'}\ket{e_i}$, hence $\ii\ket{e_i}=\pm\ket{e_i}$ which is not possible. Thus $b=b'$. Finally, if $\ZZ_vZ_B=\ZZ_{v'}Z_{B'}$, we show that $v=v'$ and $B=B'$ by comparing their action on $\ket{e_i}$ and $\ket{e_i\oplus e_j}$ for any $i,j$. Since $\czpg$ is a normal subgroup of $\czxpg$, the semidirect product structure is a consequence of the existence and uniqueness of the decomposition. The order of $\czxpg$ is computed using Theorem \ref{X-GL}.  
\end{proof}

\begin{figure}[h]
  $\mathtt{ALGORITHM}$ : Normal form for a stabilizer circuit of $\PP$, $\cz$ and $\cnot$ gates.
  
    $\mathtt{INPUT} :  (C,C')$, where
    
    $\qquad C$ is a circuit of $\ell$ gates in the form $\prod_{k=1}^{\ell}M_k$, where $M_k\in\{\PP_i,Z_{ij},X_{[ij]}\}$,

    $\qquad C'$ is a circuit in the normal form $\ZZ_v\PP_bZ_BX_A$.

    $\mathtt{OUTPUT} :$ $C''$, an equivalent circuit to the product $CC'$ in normal form.

    $\mathtt{1}\quad C''\leftarrow C';$ 

    $\mathtt{2}\quad \mathtt{for}\ k=\ell\ \mathtt{to}\ 1\ \mathtt{do}$
  
    $\mathtt{3}\quad\quad \mathtt{if}\  M_k=Z_{ij}\ \mathtt{then}$
    
    $\mathtt{4}\quad\quad\quad B\leftarrow B \oplus\{\{i,j\}\}\ ; $
    
    $\mathtt{5}\quad\quad \mathtt{else\ if\ } M_k=\PP_i\ \mathtt{then}$
    
    $\mathtt{6}\quad\quad\quad v\leftarrow v\oplus b_ie_i\ ; \ b\leftarrow b \oplus e_i\ ; $
    
    $\mathtt{7}\quad\quad \mathtt{else\ }\ $\quad  // $M_k=X_{[ij]}$ 
    
    $\mathtt{8}\quad\quad\quad v\leftarrow [ji]v \oplus b_ib_je_j\oplus B_{ij}e_j\ ;\ B\leftarrow [ji]B[ij]\oplus b_i\{\{i,j\}\}\ ;b\leftarrow [ji]b\ ;\ A\leftarrow [ij]A\ ;$
    
    $\mathtt{9}\quad \mathtt{return}\ C'';$
    
    \caption{ Algorithm  $\ctopzx$ \label{C-to-PZX}}
  \end{figure}
  
  The  $\ctopzx$ algorithm computes a normal form for particular stabilizer circuits. In \cite{2019BL} we defined the $\ctozs$ algorithm that rewrites any circuit of
  $\swap$ and $\cz$ gates in the form $Z_BS_{\sigma}$, where $S_{\sigma}$ is a circuit of $\swap$ gates corresponding to the permutation $\sigma$ in $\symg$. So the $\ctopzx$ algorithm can be viewed as  an extension of the algorithm $\ctozs$ to the set of input gates $\{\PP_i,Z_{ij},X_{[ij]}\}$. The $\ctopzx$ algorithm is of great use because it is a subroutine called by the main algorithm that computes a normal form for any stabilizer circuits (see next section). The conjugation rules below will also be very helpful.
Let $\hh=\prod_{i=0}^{n-1}\HH_i$, let $U^{\hh}=\hh U\hh^{-1}=\hh U\hh$ for any $U\in\UG$. We use $U^{-\hh}$ as a shorthand notation for $(U^{\hh})^{-1}$.
\begin{align}
  &\hh X_A\hh^{-1}=X_{A^{-\T}}\label{conj-XA-h}\\
  &\PP_i^{\hh}\PP_i\PP_i^{-\hh}=\ee^{\ii\frac{\pi}{4}}\HH_i\XX_i\label{conj-p-ph}\\
  &\PP_i^{\hh}Z_{ik}\PP_i^{-\hh}=Z_{ik}X_{[ik]}\PP_k\label{conj-z-ph}\\
  &Z_{ij}^{\hh}Z_{ij}Z_{ij}^{\hh}=Z_{ij}Z_{ij}^{\hh}Z_{ij}=\HH_i\HH_jX_{(ij)}=X_{(ij)}\HH_i\HH_j\label{conj-zij-zijh}\\
  &Z_{ij}^{\hh}Z_{ik}Z_{ij}^{\hh}=X_{[jk]}Z_{ik}=Z_{ik}X_{[jk]}\quad (i,j,k\text{ distinct})\label{conj-zik-zijh}\\
  &Z_{ij}^{\hh}\PP_jZ_{ij}^{\hh}=\PP_i^{\hh}X_{[ij]}\PP_j\label{conj-pj-zijh}
\end{align}

Identity \eqref{conj-XA-h} is a straightforward consequence of \eqref{conj-xij-h}. Identity \eqref{conj-p-ph} comes from \eqref{hp3}, \eqref{conj-z-h}, \eqref{conj-x-p} and \eqref{yixz}. We prove Identity \eqref{conj-z-ph} using \eqref{conj-Pi}, \eqref{hp3} and \eqref{zijxij}. We prove  Identity \eqref{conj-zij-zijh} using
\eqref{zijxij} and \eqref{sijxij}. Identity \eqref{conj-zik-zijh} is proved using \eqref{zijxij} and \eqref{conj-xij-xjk}.
Identity \eqref{conj-pj-zijh} is proved using \eqref{zijxij} and \eqref{conj-Pi}.

The last identities we need to write a stabilizer circuit in normal form are the conjugation rules of a Pauli product $\XX_{u}\ZZ_{v}$ by the gates $\PP_i$, $X_{[ij]}$, $Z_{ij}$ and $\hh$.
\begin{align}
  &\PP_i\XX_{u}\ZZ_{v}\PP_i^{-1}=\ii^{u_i}\XX_{u}\ZZ_{v\oplus u_i e_i}\label{conj-xz-pi}\\
  &X_{[ij]}\XX_{u}\ZZ_{v}X_{[ij]}=\XX_{[ij]u}\ZZ_{[ji]v}\label{conj-xz-xij}\\
  &Z_{ij}\XX_{u}\ZZ_{v}Z_{ij}=(-1)^{u_iu_j}\XX_{u}\ZZ_{v\oplus u_j e_i\oplus u_i e_j}=(-1)^{u_iu_j}\XX_{u}\ZZ_{v\oplus \{\{i,j\}\}u}\label{conj-xz-zij}\\
  &\hh\XX_u\ZZ_v\hh=\XX_v\ZZ_u\label{conj-xz-h}
\end{align}
Identity \eqref{conj-xz-pi} comes from \eqref{conj-x-p} and \eqref{yixz},
Identity \eqref{conj-xz-xij} from \eqref{conj-Za-xij}, \eqref{conj-z-h} and \eqref{conj-xij-h}, Identity \eqref{conj-xz-zij} from \eqref{conj-xz-xij}, \eqref{anticom} and \eqref{conj-z-h}, Identity \eqref{conj-xz-h} from \eqref{conj-z-h}. 
Finally we generalize below Identities \eqref{conj-xz-pi}, \eqref{conj-xz-xij} and \eqref{conj-xz-zij}.
\begin{align}
  &\PP_b\XX_{u}\ZZ_{v}\PP_b^{-1}=\ii^{\sum_i u_i}\XX_{u}\ZZ_{v\oplus\sum_i b_i u_i e_i}=\ii^{\sum_i u_i}\XX_{u}\ZZ_{v\oplus bu}\label{conj-xz-pb}\\
  &X_{A}\XX_{u}\ZZ_{v}X_{A}^{-1}=\XX_{Au}\ZZ_{A^{-\T}v}\label{conj-xz-XA}\\
  &Z_{B}\XX_{u}\ZZ_{v}Z_{B}=(-1)^{q_B(u)}\XX_{u}\ZZ_{v\oplus Bu}\label{conj-xz-ZB}
\end{align}

\section{Computing normal forms\label{NF}}
In this section we provide algorithms to compute two kinds of normal forms for stabilizer circuits. 
 The first normal form is a generalization of the form $\ZZ_{v}\PP_bZ_BX_A$.

\begin{theo}\label{nf-th} \textbf{Normal form for stabilizer circuits}
  
      Any  stabilizer circuit  $C=\prod_{k=1}^{\ell} M_k $, where $M_k\in\{\PP_i,\HH_i,X_{[ij]}\}$ can be written in polynomial time $O(\ell n^2)$ in the form
     \begin{equation}
     \HH_{\alpha}\PP_dZ_D\HH_{\omega}\ee^{\ii\phi}\XX_{u}\ZZ_{v}\PP_bZ_BX_A\ ,\label{nf}
 \end{equation}

 where     $\alpha,d,\omega,u,v,b\in \F^n$, $D,B\in\BG$, $A\in\GL$, and $\phi\in\{k\frac{\pi}{4},k\in\Z\}$.
    \end{theo}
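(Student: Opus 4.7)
I would argue by induction on $\ell$ and mirror the structure of Algorithm $\cto$: sweep the gates right-to-left, maintaining at each step parameters $(w,b,B,\phi,u,v,d,D,A)$ that encode the current tail $M_kM_{k+1}\cdots M_\ell$ in the form \eqref{nf}. The base case $\ell=0$ is handled by taking $w=(1,\ldots,1)^{\T}$ (so that $\HH_w\hh=\III$), $A=\III$, $\phi=0$, and all other vectors or sets trivial; the form then collapses to $\III$.

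For the inductive step I would treat the three gate types separately. If $M_k=\HH_i$, single-qubit Hadamards commute, so $\HH_i\HH_w=\HH_{w\oplus e_i}$, and I simply update $w\leftarrow w\oplus e_i$ in $O(1)$. If $M_k\in\{\PP_i,Z_{ij},X_{[ij]}\}$, I would push $M_k$ rightward through each successive layer using the conjugation rules of Section \ref{groups}. Crossing $\HH_w$ is done by case analysis on the overlap of the qubits of $M_k$ with the support of $w$, using \eqref{zijxij} to swap $X_{[ij]}$ and $Z_{ij}$ in the clean subcases and the identities \eqref{conj-p-ph}--\eqref{conj-pj-zijh} to clear the harder ones. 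Crossing $\PP_bZ_B$ uses Proposition \ref{normal} together with the commutativity of diagonal factors, absorbing the updates into $b$ and $B$. Crossing $\hh$ uses \eqref{conj-XA-h} and \eqref{conj-xz-h} to trade an $X_A$ for $X_{A^{-\T}}$ or swap the $\XX$/$\ZZ$ components of a Pauli. Crossing $\XX_u\ZZ_v$ uses \eqref{conj-xz-pb}--\eqref{conj-xz-ZB}, contributing only a $\pm 1$ or $\ii$ factor to $\phi$. Whatever elementary gate pops out on the right is finally absorbed into $\PP_dZ_DX_A$ by Algorithm $\cto$, with residual scalars accumulated into $\phi$.

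Each substep is one row/column operation on an $n\times n$ $\F$-matrix (or a handful of XORs on length-$n$ vectors), giving a per-gate cost of $O(n^2)$ and a total of $O(\ell n^2)$ as announced. Existence of the form follows from the induction; uniqueness is not claimed at this level of generality.

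The hard part will be the passage through $\HH_w$: when $M_k=\PP_i$ with $w_i=1$ the naive conjugate becomes $\PP_i^{\hh}$, and when $M_k=X_{[ij]}$ with $w_j=1$ a Hadamard lands on the control of the $\cnot$, neither of which is a single generator. The role of the identities \eqref{conj-p-ph}--\eqref{conj-pj-zijh} is precisely to dispose of these awkward $\hh$-conjugates: each commutes a $\PP_i^{\hh}$ or a $Z_{ij}^{\hh}$ past one diagonal factor at the price of one extra elementary $\XX$, $\PP$ or $X_{[ij]}$ gate. The main bookkeeping in the full proof is to verify that this chain of rewrites produces only $O(n)$ additional gates per input $M_k$, so that the claimed $O(\ell n^2)$ complexity is preserved.
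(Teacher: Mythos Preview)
Your proposal is correct and matches the paper's approach: induction on $\ell$, left-multiplying one generator at a time into an existing normal form, with the case split governed by the gate type and the relevant bits of $w$, and the awkward $\hh$-conjugates $\PP_i^{\hh}$ and $Z_{ij}^{\hh}$ dissolved via identities \eqref{conj-p-ph}--\eqref{conj-pj-zijh} exactly as you anticipate. The paper's proof simply fleshes out the full case tree (the deepest branch being $M=X_{[ij]}$ with $w_i=0$, $w_j=1$) and confirms that each hard subcase spawns only $O(1)$ recursive calls to easier subcases together with $O(n)$-gate inputs to $\cto$, so the $O(n^2)$ cost per generator and the overall $O(\ell n^2)$ bound hold as you outline.
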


    \begin{proof} The proof consists in the description of an algorithm that builds the normal form by induction. This algorithm is called the $\ctonf$ algorithm and its skeleton is given in  Figure \ref{C-to-NF}. More precisely, we prove by induction on the length $\ell$ of the input circuit that any  stabilizer circuit  $C=\prod_{k=1}^{\ell} M_k $ can be written in the  normal form $\HH_a\PP_dZ_D\hh\ee^{\ii\phi}\XX_{u}\ZZ_{v}\PP_bZ_BX_A$.
      Then the result comes from a straightforward simplification of $\HH_a$ with $\hh$. We define $\alpha$ and $\omega$ as follows. Let $Q$ be the set of qubits involved in the subcircuit $\PP_dZ_D$, \textit{i.e} $Q=\{i\mid d_i=1\}\cup\{i\mid \exists j, D_{ij}=1\}$. If $a_i=1$ and $i\notin Q$ then $\alpha_i=\omega_i=0$,
      otherwise $\alpha_i=a_i$ and $\omega_i=1$. This simplification corresponds to the subroutine \texttt{simplify} mentioned in Figure \ref{C-to-NF}.
    
    The base case of the induction is clear : if $\ell=0$ then $C=\III$ and a normal form for $C$ is $\HH_a\hh$, where $a=[1,\cdots,1]^{\T}\hh$. To realize the induction step, we prove that, for any $M\in\{\HH_i,\PP_i,X_{[ij]}\}$ and any $C$ in the form $\HH_a\PP_dZ_D\hh\ee^{\ii\phi}\XX_{u}\ZZ_{v}\PP_bZ_BX_A$, the product $MC$ can be written in the same form. The induction step is divided in three cases and a few subcases.
\medskip

The first case corresponds to the subroutine \texttt{merge-Hadamard} mentioned in Figure \ref{C-to-NF}.
\medskip

\texttt{Case 1} : $M=\HH_i$ and $\HH_iC=\HH_{a\oplus e_i}\PP_dZ_D\ee^{\ii\phi}\hh\XX_{u}\ZZ_{v}\PP_bZ_BX_A$, so $\HH_iC$ is in normal form.
\medskip

The second case corresponds to the subroutine \texttt{merge-phase} mentioned in Figure \ref{C-to-NF}.\medskip

\texttt{Case 2} : $M=\PP_i$ and $\PP_iC=\HH_{a}\cdot\HH_a\PP_i\HH_a\cdot\PP_dZ_D\hh\ee^{i\phi}\XX_{u}\ZZ_{v}\PP_b Z_BX_A$ (we use the dot $\cdot$ just to make reading easier).
\medskip

\texttt{Case 2.1} : $a_i=0$. Then $\HH_a\PP_i\HH_a=\PP_i$, so

$\PP_iC\stackrel{\eqref{czpg-mult}}{=}\HH_{a}\ZZ_{d_i e_i}\PP_{d\oplus e_i}Z_D\hh\ee^{\ii\phi}\XX_{u}\ZZ_{v}\PP_bZ_BX_A\stackrel{\eqref{conj-z-h}}{=}
\HH_{a}\PP_{d\oplus e_i}Z_D\hh\ee^{\ii\phi}\XX_{u\oplus d_i e_i}\ZZ_{v}\PP_bZ_BX_A$  and $\PP_iC$ is in normal form.

\medskip

\texttt{Case 2.2} : $a_i=1$. Then $\HH_a\PP_i\HH_a=\PP_i^{\hh}$, so

$\PP_iC=\HH_{a}\cdot\PP_i^{\hh}\PP_{d}\PP_i^{-\hh}\cdot\PP_i ^{\hh}Z_D\PP_i^{-\hh}\cdot\hh\cdot \PP_i\ee^{\ii\phi}\XX_{u}\ZZ_{v}\PP_i^{-1}\cdot \PP_i\PP_bZ_BX_A$,

$\PP_iC\stackrel{\eqref{conj-xz-pi},\eqref{czpg-mult}}{=}\HH_{a}\cdot\PP_i^{\hh}\PP_{d}\PP_i^{-\hh}\cdot\PP_i^{\hh} Z_D\PP_i^{-\hh}\cdot
\hh\cdot\ee^{\ii\phi}\ii^{u_i}\XX_{u}\ZZ_{v\oplus u_i e_i}\cdot \ZZ_{b_i e_i}\PP_{b\oplus e_i}Z_BX_A$.

Let $\phi'=\phi + u_i\frac\pi2$, $u'=u$, $v'=v\oplus u_i e_i\oplus b_i e_i$ and $b'=b\oplus e_i$ then

$\PP_iC=\HH_{a}\cdot\PP_i^{\hh}\PP_{d}\PP_i^{-\hh}\cdot\PP_i^{\hh} Z_D\PP_i^{-\hh}\cdot
\hh\cdot\ee^{\ii\phi'}\XX_{u'}\ZZ_{v'}\cdot\PP_{b'}Z_BX_A$.\medskip

\texttt{Case 2.2.1} : $d_i=0$.  Then $\PP_i^{\hh}\PP_{d}\PP_i^{-\hh}=\PP_{d}$.

Let $D_i=\{\{p,q\}\in D\mid i\in\{p,q\}\}$ and $K_i=\{k\mid \{k,i\}\in D\}$, then

$\PP_i^{\hh} Z_D\PP_i^{-\hh}=Z_{D\oplus D_i}\PP_i^{\hh}\cdot \prod_{k\in K_i}Z_{ik}\cdot\PP_i^{-\hh}
\stackrel{\eqref{conj-z-ph}}{=}Z_{D\oplus D_i}\prod_{k\in K_i}\left(Z_{ik}X_{ik}\PP_k\right)$, hence

$\PP_iC=\HH_{a}\PP_{d}\cdot Z_{D\oplus D_i}\prod_{k\in K_i}\left(Z_{ik}X_{ik}\PP_k\right)\cdot
\hh\ee^{\ii\phi'}\XX_{u'}\ZZ_{v'}\cdot\PP_{b'}Z_BX_A$.

Let $\ZZ_{w'}\PP_{d'}Z_{D'}X_{A'}=\ctopzx(\PP_{d}Z_{D\oplus D_i}\prod_{k\in K_i}Z_{ik}X_{ik}\PP_k,\III)$, then $A'=\prod_{k\in K_i}[ik]$ and
 
$\PP_iC=\HH_{a}\ZZ_{w'}\PP_{d'}Z_{D'}X_{A'}\hh\ee^{\ii\phi'}\XX_{u'}\ZZ_{v'}\PP_{b'}Z_BX_A$. 

So $\PP_iC\stackrel{\eqref{conj-XA-h}}{=}\HH_{a}\ZZ_{w'}\PP_{d'}Z_{D'}\hh\cdot X_{A'^{-\T}}\ee^{\ii\phi'}\XX_{u'}\ZZ_{v'}X_{A'^{-\T}}^{-1}\cdot X_{A'^{-\T}}\PP_{b'}Z_BX_A$, and $A'^{-\T}=\prod_{k\in K_i}[ki]$.

Let $\ee^{\ii\phi''}\XX_{u''}\ZZ_{v''}\stackrel{\eqref{conj-xz-XA}}{=}X_{A'^{-\T}}\ee^{\ii\phi'}\XX_{u'}\ZZ_{v'}X_{A'^{-\T}}^{-1}$.

Let $\ZZ_{w''}\PP_{b''}Z_{B''}X_{A''}=\ctopzx\left(X_{A'^{-\T}},\PP_{b'}Z_BX_A\right)$.

Then $\PP_iC=\HH_{a}\ZZ_{w'}\PP_{d'}Z_{D'}\hh\ee^{\ii\phi''}\XX_{u''}\ZZ_{v''}\ZZ_{w''}\PP_{b''}Z_{B''}X_{A''}$, so

$\PP_iC\stackrel{\eqref{conj-z-h}}{=}\HH_{a}\PP_{d'}Z_{D'}\hh\ee^{\ii\phi''}\XX_{u''\oplus w'}\ZZ_{v''\oplus w''}\PP_{b''}Z_{B''}X_{A''}$, and $\PP_iC$ is in normal form.
\medskip

\texttt{Case 2.2.2} :  $d_i=1$.  Then $\PP_i^{\hh}\PP_{d}\PP_i^{-\hh}=\PP_i^{\hh}\PP_i\PP_i^{-\hh}\cdot\PP_{d\oplus e_i}
\stackrel{\eqref{conj-p-ph}}{=}\ee^{\ii\frac{\pi}{4}}\HH_i\XX_i\cdot\PP_{d\oplus e_i}$, so

$\PP_iC=\HH_{a}\cdot\ee^{\ii\frac{\pi}{4}}\HH_i\XX_i\PP_{d\oplus e_i}\cdot\PP_i^{\hh} Z_D\PP_i^{-\hh}\cdot\hh\cdot\ee^{\ii\phi'}\XX_{u'}\ZZ_{v'}\cdot\PP_{b'}Z_BX_A$,

$\PP_iC=\HH_{a\oplus e_i}\PP_{d\oplus e_i}\XX_i\cdot\PP_i^{\hh} Z_D\PP_i^{-\hh}\cdot\hh\cdot\ee^{\ii(\phi'+\frac{\pi}{4})}\XX_{u'}\ZZ_{v'}\cdot\PP_{b'}Z_BX_A$.

$\PP_iC=\HH_{a\oplus e_i}\PP_{d\oplus e_i}\PP_i^{\hh} Z_D\PP_i^{-\hh}\cdot(\PP_i^{\hh} Z_D\PP_i^{-\hh})^{-1}\XX_i(\PP_i^{\hh} Z_D\PP_i^{-\hh})\cdot
\hh\cdot\ee^{\ii(\phi'+\frac{\pi}{4})}\XX_{u'}\ZZ_{v'}\cdot\PP_{b'}Z_BX_A$.

Using Identity \eqref{conj-xz-pi},\eqref{conj-xz-h},\eqref{conj-xz-ZB} and \eqref{pauli-mult}, we compute $\phi'',u'',v''$ such that

$\PP_iC=\HH_{a\oplus e_i}\PP_{d\oplus e_i}\PP_i^{\hh} Z_D\PP_i^{-\hh}\hh\ee^{\ii\phi''}\XX_{u''}\ZZ_{v''}\PP_{b'}Z_BX_A$.
Then we proceed as in Case 2.2.1. and we obtain a normal form for $\PP_iC$

\medskip
The third and last case corresponds to the subroutine \texttt{merge-CNOT} mentioned in Figure \ref{C-to-NF}.
\medskip

\texttt{Case 3} : $M=X_{[ij]}$.

$X_{[ij]}C=\HH_{a}\cdot\HH_aX_{[ij]}\HH_a\cdot\PP_dZ_D\hh\ee^{i\phi}\XX_{u}\ZZ_{v}\PP_bZ_BX_A$.
\medskip

\texttt{Case 3.1} : $a_i=a_j=0$.

Then $\HH_aX_{[ij]}\HH_a=X_{[ij]}$ and
$X_{[ij]}C=\HH_{a}X_{[ij]}\PP_dZ_D \hh\ee^{i\phi}\XX_{u}\ZZ_{v}\PP_bZ_BX_A$.

Let $\ZZ_{w'}\PP_{d'}Z_{D'}X_{[ij]}=\ctopzx(X_{[ij]},\PP_dZ_D)$ , then

$X_{[ij]}C=\HH_{a}\ZZ_{w'}\PP_{d'}Z_{D'}X_{[ij]}\hh\ee^{i\phi}\XX_{u}\ZZ_{v}\PP_bZ_BX_A$,

$X_{[ij]}C\stackrel{\eqref{conj-xij-h}}{=}\HH_{a}\ZZ_{w'}\PP_{d'}Z_{D'}\cdot\hh\cdot X_{[ji]}\ee^{i\phi}\XX_{u}\ZZ_{v}X_{[ji]}\cdot X_{[ji]}\PP_bZ_BX_A$,

$X_{[ij]}C\stackrel{\eqref{conj-xz-xij}}{=}\HH_{a}\ZZ_{w'}\PP_{d'}Z_{D'}\cdot\hh\cdot\ee^{i\phi}\XX_{[ji]u}\ZZ_{[ij]v}\cdot X_{[ji]}\PP_bZ_BX_A$,

$X_{[ij]}C\stackrel{\eqref{conj-z-h}}{=}\HH_{a}\PP_{d'}Z_{D'}\cdot\hh\cdot\ee^{i\phi}\XX_{[ji]u\oplus w'}\ZZ_{[ij]v}\cdot X_{[ji]}\PP_bZ_BX_A$.

Let $\ZZ_{w''}\PP_{b''}Z_{B''}X_{A''}=\ctopzx(X_{[ji]},\PP_bZ_BX_A)$, then

$X_{[ij]}C=\HH_{a}\PP_{d'}Z_{D'}\hh\ee^{i\phi}\XX_{[ji]u\oplus w'}\ZZ_{[ij]v\oplus w''}\PP_{b''}Z_{B''}X_{A''}$ and $X_{[ij]}C$ is in normal form.

\medskip

\texttt{Case 3.2} : $a_i=a_j=1$.

Then $\HH_aX_{[ij]}\HH_a\stackrel{\eqref{conj-xij-h}}{=}X_{[ji]}$ and we proceed as in case 3.1, swapping $i$ and $j$.
\medskip

\texttt{Case 3.3} : $a_i=1$ and $a_j=0$.

Then $\HH_aX_{[ij]}\HH_a\stackrel{\eqref{zijxij}}{=}Z_{ij}$ and a normal form for $X_{[ij]}C$ is

$\HH_{a}\PP_dZ_{D\oplus\{\{i,j\}\}}\hh\ee^{i\phi}\XX_{u}\ZZ_{v}\PP_bZ_BX_{A}$.
\medskip

\texttt{Case 3.4} : $a_i=0$ and $a_j=1$.

Then $\HH_aX_{[ij]}\HH_a=Z_{ij}^{\hh}$ and
$X_{[ij]}C=\HH_{a}Z_{ij}^{\hh}\PP_dZ_D\hh\ee^{i\phi}\XX_{u}\ZZ_{v}\PP_bZ_BX_{A}$.
\medskip

\texttt{Case 3.4.1} : $D_{ij}=0$.

$X_{[ij]}C=\HH_{a}\cdot Z_{ij}^{\hh}\PP_dZ_{ij}^{\hh}\cdot Z_{ij}^{\hh}Z_DZ_{ij}^{\hh}\cdot \hh \cdot Z_{ij} \ee^{i\phi} \XX_{u}\ZZ_{v}Z_{ij}\cdot Z_{ij}\PP_bZ_{B}X_{A}$

$X_{[ij]}C\stackrel{\eqref{conj-xz-zij}}{=}\HH_{a}\cdot Z_{ij}^{\hh}\PP_dZ_{ij}^{\hh}\cdot Z_{ij}^{\hh}Z_DZ_{ij}^{\hh}\cdot
\hh \cdot \ee^{i\phi}(-1)^{u_iu_j}\XX_{u}\ZZ_{v\oplus u_i e_j \oplus u_j e_i}\cdot\PP_bZ_{B\oplus \{\{i,j\}\}}X_{A}$

Let $\phi'=\phi + u_iu_j\pi$, $u'=u$, $v'=v\oplus u_i e_j \oplus u_j e_i $ and $B'=B\oplus \{\{i,j\}\}$, then

$X_{[ij]}C=\HH_{a}\cdot Z_{ij}^{\hh}\PP_dZ_{ij}^{\hh}\cdot Z_{ij}^{\hh}Z_DZ_{ij}^{\hh}\cdot
\hh \cdot \ee^{i\phi'}\XX_{u'}\ZZ_{v'}\cdot\PP_bZ_{B'}X_{A}$.

Let $D_i=\{\{p,q\}\in D\mid i\in\{p,q\}\}$ and $K_i=\{k\mid \{i,k\}\in D\}$, then $D_i\cap D_j=\emptyset$ since $D_{ij}=0$ and 
$Z_{ij}^{\hh}Z_DZ_{ij}^{\hh}=Z_{D\oplus D_i\oplus D_j}\cdot Z_{ij}^{\hh}\cdot\prod_{k\in K_i}Z_{ik}\cdot\prod_{k\in K_j}Z_{jk}\cdot Z_{ij}^{\hh}$, so

$Z_{ij}^{\hh}Z_DZ_{ij}^{\hh}\stackrel{\eqref{conj-zik-zijh}}{=}Z_{D\oplus D_i\oplus D_j}\cdot\prod_{k\in K_i}Z_{ik}X_{[jk]}\cdot\prod_{k\in K_j}Z_{jk}X_{[ik]}$.

Let $\ZZ_{w'}Z_{D'}X_{A'}=\ctopzx(Z_{ij}^{\hh}Z_DZ_{ij}^{\hh},\III)$, then $A'=\prod_{k\in K_i}[jk]\prod_{k\in K_j}[ik]$ and

$X_{[ij]}C=\HH_{a}\cdot Z_{ij}^{\hh}\PP_dZ_{ij}^{\hh}\cdot \ZZ_{w'}Z_{D'}X_{A'} \cdot\hh\cdot  \ee^{i\phi'}\XX_{u'}\ZZ_{v'}\cdot\PP_bZ_{B'}X_{A}$.

So $X_{[ij]}C\stackrel{\eqref{conj-XA-h}}{=}\HH_{a}\cdot Z_{ij}^{\hh}\PP_dZ_{ij}^{\hh}\cdot\ZZ_{w'}Z_{D'}\cdot \hh\cdot
X_{A'^{-\T}}\ee^{i\phi'} \XX_{u'}\ZZ_{v'}X_{A'^{-\T}}^{-1}\cdot
X_{A'^{-\T}}\PP_bZ_{B'}X_{A}$,

and $A'^{-\T}=\prod_{k\in K_i}[kj]\prod_{k\in K_j}[ki]$.

Let $\ee^{i\phi''}\XX_{u''}\ZZ_{v''}\stackrel{\eqref{conj-xz-XA}}{=} X_{A'^{-\T}}\ee^{i\phi'}\XX_{u'}\ZZ_{v'}X_{A'^{-\T}}^{-1}$.

Let $\ZZ_{w''}\PP_{b''}Z_{B''}X_{A''}=\ctopzx(X_{A'^{-\T}},\PP_bZ_{B'}X_{A})$.

Then $X_{[ij]}C=\HH_{a}\cdot Z_{ij}^{\hh}\PP_dZ_{ij}^{\hh}\cdot\ZZ_{w'}Z_{D'} \hh
\ee^{i\phi''}\XX_{u''}\ZZ_{v''}\ZZ_{w''}\PP_{b''}Z_{B''}X_{A''}$, so

$X_{[ij]}C\stackrel{\eqref{conj-z-h}}{=}\HH_{a}\cdot Z_{ij}^{\hh}\PP_dZ_{ij}^{\hh}\cdot Z_{D'} \hh
\ee^{i\phi''}\XX_{u''\oplus w'}\ZZ_{v''\oplus w''}\PP_{b''}Z_{B''}X_{A''}$.

\medskip

\texttt{Case 3.4.1.1} : $d_i=d_j=0$.

Then $Z_{ij}^{\hh}\PP_dZ_{ij}^{\hh}=\PP_d$ and
$X_{[ij]}C=\HH_{a}\PP_d Z_{D'} \hh \ee^{i\phi''}\XX_{u''\oplus w'}\ZZ_{v''\oplus w''} \PP_{b''}Z_{B''}X_{A''}$,

so $X_{ij}C$ is in normal form.
\medskip

\texttt{Case 3.4.1.2} : $d_i=0$ and $d_j=1$.

Then $Z_{ij}^{\hh}\PP_dZ_{ij}^{\hh}=Z_{ij}^{\hh}\PP_jZ_{ij}^{\hh}\PP_{d\oplus e_j}\stackrel{\eqref{conj-pj-zijh}}{=}\PP_i^{\hh}X_{[ij]}\PP_j\PP_{d\oplus e_j}
=\PP_i^{\hh}X_{[ij]}\PP_{d}$

Hence $X_{[ij]}C=\HH_{a}\cdot\PP_i^{\hh}X_{[ij]}\PP_{d} \cdot Z_{D'} \hh
\ee^{i\phi''}\XX_{u''\oplus w'}\ZZ_{v''\oplus w''}\PP_{b''}Z_{B''}X_{A''}$.

Observe that $\PP_{d} Z_{D'} \hh\ee^{i\phi''}\XX_{u''\oplus w'}\ZZ_{v''\oplus w''}\PP_{b''}Z_{B''}X_{A''}$ is already in normal form, so we merge 
$X_{[ij]}$ with this form using case 3.1. As no Hadamard gate is created in case 3.1, we can use case 2.2 to merge $\PP_i^{\hh}$ . Finally we merge $\HH_a$ using case 1
and obtain thereby a  normal form for $X_{[ij]}C$.

\medskip

\texttt{Case 3.4.1.3} : $d_i=1$ and $d_j=0$.

We proceed as in case 3.4.1.2, swapping $i$ and $j$.

\medskip

\texttt{Case 3.4.1.4} : $d_i=d_j=1$.

Then $Z_{ij}^{\hh}\PP_dZ_{ij}^{\hh}=Z_{ij}^{\hh}\PP_j\PP_iZ_{ij}^{\hh}\PP_{d\oplus e_i\oplus e_j}
\stackrel{\eqref{conj-pj-zijh}}{=}\PP_i^{\hh}X_{[ij]}\PP_j \PP_j^{\hh}X_{[ji]}\PP_i\PP_{d\oplus e_i\oplus e_j}$.

Since Identity \eqref{conj-Pj}, $\PP_j^{\hh}X_{[ji]}$ and $\PP_i\PP_{d\oplus e_i\oplus e_j}$ commutes, so 
$Z_{ij}^{\hh}\PP_dZ_{ij}^{\hh}=\PP_i^{\hh}X_{[ij]}\PP_d\PP_j^{\hh}X_{[ji]}$.

Hence $X_{[ij]}C=\HH_{a}\cdot \PP_i^{\hh}X_{[ij]}\PP_d\PP_j^{\hh}X_{[ji]}\cdot Z_{D'} \hh
\ee^{i\phi''}\XX_{u''\oplus w'}\ZZ_{v''\oplus w''}\PP_{b''}Z_{B''}X_{A''}$.

Observe that $Z_{D'} \hh\ee^{i\phi''}\XX_{u''\oplus w'}\ZZ_{v''\oplus w''}\PP_{b''}Z_{B''}X_{A''}$ is already in normal form, so we  merge $X_{[ji]}$ with this
form using case 3.1. As no Hadamard gate is created in case 3.1, we can use cases 2.2 and 2.2.1 to merge $\PP_j^{\hh}$.  As no Hadamard gate is created in case
2.2.1, we can use case 2.1 to merge $\PP_d$ and case 3.1 to merge  $X_{[ij]}$. Again, no Hadamard gate is created in cases 2.1 and 3.1, so we finally merge  $\PP_i^{\hh}$ (case 2.2) and $\HH_a$ (case 1) and obtain a normal form for $X_{ij}C$.

\medskip

\texttt{Case 3.4.2} : $D_{ij}=1$.

$X_{[ij]}C=\HH_{a}\cdot Z_{ij}^{\hh}Z_{ij}\cdot\PP_dZ_{D'}\hh\ee^{i\phi}\XX_{u}\ZZ_{v}\PP_bZ_BX_{A}$, where $D'=D\oplus \{\{i,j\}\}$, hence $D'_{ij}=0$.

$X_{[ij]}C\stackrel{\ref{conj-zij-zijh}}{=}\HH_{a}\cdot \HH_i\HH_jX_{(ij)}Z_{ij}^{\hh}  \cdot\PP_dZ_{D'}\hh\ee^{i\phi}\XX_{u}\ZZ_{v}\PP_bZ_BX_{A}$.

We use the conjugations rules \eqref{conj-Za-sij}, \eqref{conj-Pb-sij} and \eqref{conj-ZB-sij} by the $\swap$ gate $X_{(ij)}$ and we merge the Hadamard gates to obtain

$X_{[ij]}C=\HH_{a\oplus e_i\oplus e_j}Z_{ij}^{\hh} \PP_{(ij)d}Z_{(ij)D'(ij)}\hh\ee^{i\phi}\XX_{(ij)u}\ZZ_{(ij)v}\PP_{(ij)b}Z_{(ij)B(ij)}X_{(ij)A}$.

Finally we proceed as in case 3.4.1, since $\{i,j\}\not\in (ij)D'(ij)$.
\medskip

Let us compute the worst case time complexity of the $\ctonf$ algorithm. 
The complexity of the $\ctopzx$ algorithm  is $O(\ell n)$, where $\ell$ is the gate count of the input circuit. As we apply the $\ctopzx$ algorithm to circuits of $O(n)$ gates
(cases 2.2 and 3.4.1), the cost of each call to $\ctopzx$ is  $O(n^2)$ operations. 
Conjugating a Pauli gate $\XX_u\ZZ_v$ by $X_A$  requires only $O(n)$ operations because we use a decomposition in less than $n$ (resp. $2n$) transvections of matrix $A$ in case 2.2.1 (resp. case 3.4.1) and each conjugation by a transvection costs $O(1)$ operations \eqref{conj-xz-xij}. Conjugating a Pauli gate $\XX_i$ by $(\PP_i^{\hh} Z_B\PP_i^{-\hh})^{-1}$ (case 2.2.2) requires $O(n^2)$ operations.
Other rewritings require $O(n)$ or $O(1)$ operations. Therefore the cost of merging a $M_k$ gate in the normal form at each induction step is $O(n^2)$ operations.
As we need $\ell$ steps to write $C$ in normal form, we see that the number of operations performed by the $\ctonf$ algorithm is $O(\ell n^2)$.
\end{proof}

  \begin{rem}
    The global phase $\phi$ of a quantum circuit is generally considered as unimportant  since it is physically unobservable. However, we decided  not to neglect it during the computation process of the normal form since knowing its exact value is, at least, of purely mathematical interest. Moreover, calculating the exact value of $\phi$ does not require much additional work. 
  \end{rem}

  \begin{rem}
    The $\ctonf$ algorithm can also take $\cz$, $\swap$, $\ZZ$, $\XX$, or $\YY$ gates as input since $Z_{ij}=\HH_iX_{ij}\HH_i$,
    $S_{ij}=X_{[ij]}X_{[ji]}X_{[ij]}$, $\ZZ_i=\PP_i^2$, $\XX_i=\HH_i\PP_i^2\HH_i$ and $\YY_i=\PP_i\XX_i\PP_i^{-1}=\PP_i\HH_i\PP_i^2\HH_i\PP_i^3$.
    \end{rem}

\begin{rem}
  Notice that the form $\ZZ_v\PP_bZ_BX_A$ defined in Theorem \ref{decompo} is just a particular case of the normal form $\HH_{\alpha}\PP_dZ_D\HH_{\omega}\ee^{\ii\phi}\XX_{u}\ZZ_{v}\PP_bZ_BX_A$. The $\ctonf$ algorithm can be applied to an input circuit of type $\prod_{k=1}^{\ell}M_k$, where $M_k\in\{\PP_i, Z_{ij},X_{[ij]}\}$ and yields in this case a circuit in the normal form  $\ZZ_v\PP_bZ_BX_A$. In this sens, the $\ctonf$ algorithm is an extension of the $\ctopzx$ algorithm to any stabilizer circuit.
\end{rem}

\begin{rem} Unlike the particular normal form $\ZZ_{v}\PP_bZ_BX_A$ computed by the $\ctopzx$ algorithm, the general normal form
  $\HH_{\alpha}\PP_dZ_D\HH_{\omega}\ee^{\ii\phi}\XX_{u}\ZZ_{v}\PP_bZ_BX_A$ is not unique. For instance, since Identity \ref{conj-zij-zijh}, one has $\HH_i\HH_jZ_{ij}\HH_i\HH_jZ_{ij}=Z_{ij}\HH_i\HH_jX_{(ij)}$.
  \end{rem}

  \begin{figure}[h]

    $\mathtt{ALGORITHM\ :}$  Normal form for a stabilizer circuit.
    
    $\mathtt{INPUT\ :}$  C, a circuit of length $\ell$ in the form $\prod_{k=1}^{\ell}M_k$, where $M_k\in\{\HH_i, \PP_i, X_{[ij]}\}$.

    $\mathtt{OUTPUT\ :}$ NF, an equivalent circuit to C in the normal form  $\HH_{\alpha}\PP_dZ_D\HH_{\omega}\ee^{\ii\phi}\XX_{u}\ZZ_{v}\PP_bZ_BX_A$.

    $\mathtt{1}$\quad\ $nv\leftarrow \text{null vector}$; $nm\leftarrow \text{null matrix}$; $I\leftarrow \text{identity matrix}$; 

    $\mathtt{2}$\quad\ $\alpha\leftarrow [1,\cdots,1]^{\T}$; $\omega\leftarrow [1,\cdots,1]^{\T}$;

    $\mathtt{3}$\quad\ $d\leftarrow nv$; $D\leftarrow nm$; $\phi\leftarrow 0$; $u\leftarrow nv$;
    $v\leftarrow nv$; $b\leftarrow nv$; $B\leftarrow nm$; $A\leftarrow I$; 
    
    $\mathtt{4}$\quad\ $\text{NF}\leftarrow \HH_{\alpha}\PP_dZ_D\HH_{\omega}\ee^{\ii\phi}\XX_{u}\ZZ_{v}\PP_bZ_BX_A$;

    $\mathtt{5}\quad\ \mathtt{for}\ k=\ell\ \mathtt{to}\ 1\ \mathtt{do}$
  
    $\mathtt{6}$\quad\quad\ \texttt{if}\  $M_k=\HH_{i}$\ \texttt{then}
    
    $\mathtt{7}$\quad\quad\quad\ \texttt{merge-Hadamard}$(\text{NF},i)$;
    
    $\mathtt{8}\quad\quad\ \mathtt{else\ if\ } M_k=\PP_i\ \mathtt{then}$
    
    $\mathtt{9}$\quad\quad\quad\ \texttt{merge-phase}$(\text{NF},i)$;
    
    $\mathtt{10}\quad\quad \mathtt{else\ }\ $\quad  

    $\mathtt{11}$\quad\quad\quad \texttt{merge-CNOT}$(\text{NF},i,j)$;

    $\mathtt{12}$\quad \texttt{simplifiy}$(\text{NF})$;
    
    $\mathtt{13}$\quad \texttt{return} NF;
    
    \caption{ Algorithm  $\ctonf$ \label{C-to-NF}}
  \end{figure}

  Note that the form \ref{nf} is given as a matrix product and corresponds actually to a quantum circuit in the form CX-CZ-P-Z-X-H-CZ-P-H
  \footnote{The reader who is not used to quantum circuits must pay attention to the following fact: the circuits act to the right of the state $\ket{\psi}$
  presented to their left but the associated operators act to the left of $\ket{\psi}$, so the order of the gates in the circuit is inverted comparing to the form \eqref{nf}.}. To obtain a quantum circuit corresponding to the normal form, one can use the algorithm proposed in 2004 by Patel \textit{et al.} \cite{2004PMH} in order to write the matrix $A$ as a product of transvections.
Our normal form is similar to the forms proposed by Duncan \textit{et al.} \cite{2020DKPV} (H-Z-P-CZ-CX-H-CZ-P-Z-H) or
by Bravyi and Maslov \cite{2020BM} (X-Z-P-CX-CZ-H-CZ-H-P). In these 3 normal forms, the 2-qubit gate count is asymptotically dominated by the two $\cz$ gate layers which contain together up to $n(n-1)$ gates. Indeed, the $\cnot$ gate layer can be decomposed in $O(n^2/\log n)$ $\cnot$ gates thanks to the algorithm by Patel \textit{et al.} \cite{2004PMH}. Therefore it would be interesting to compute another normal form where the gate count is dominated by the $\cnot$ gate layers. To this end we need the following definition and lemma.

\begin{figure}[h]
  $\mathtt{ALGORITHM\ :}$   Reduction of a matrix in $\BG$.
  
  $\mathtt{INPUT\ :}$  $B$, a matrix in $\BG$.
  
  $\mathtt{OUTPUT\ :}$ $(B',A)$, where

    $\quad B'\in\BG$ is a reduced matrix congruent to $B$,

    $\quad A\in\GL$ satisfies the congruence relation $B'=A^{\T}BA$.

    $\mathtt{1}\quad\ B'\leftarrow B;$

    $\mathtt{2}\quad\ A\leftarrow \text{Identity};$

    $\mathtt{3}\quad\ $/* $\mathrm{pivot}[j]=\mathtt{true}$, if $j$ has already been chosen as a pivot */

    $\mathtt{4}\quad\ \mathtt{for}\ j= 0 \ \mathtt{to}\ n-1\ \mathtt{do}$ 

    $\mathtt{5}\quad\quad\ \mathrm{pivot}[j]\leftarrow \mathtt{false}$;

    $\mathtt{6}\quad\ \mathtt{for}\ j= 0 \ \mathtt{to}\ n-2\ \mathtt{do}$ //nothing to do on column $n-1$
  
    $\mathtt{7}\quad\quad\ \mathtt{if}\ \mathrm{pivot}[j]\ \mathtt{or}\ \mathrm{card}\{i\mid B'_{ij}=1\}=0\ \mathtt{then}$
    
    $\mathtt{8}\quad\quad\quad\ \mathtt{continue};$

    $\mathtt{9}\quad\quad\ $/* choosing pivot */
    
    $\mathtt{10}\quad\quad p\leftarrow \mathrm{min}\{ i \mid B'_{ij}=1\};$

    $\mathtt{11}\quad\quad \mathrm{pivot}[p]\leftarrow \mathtt{true};$

    $\mathtt{12}\quad\quad $/* Step a : eliminating the remaining 1's on column $j$ and line $j$ */

    $\mathtt{13}\quad\quad \mathtt{for}\ r=p + 1\ \mathtt{to}\ n-1\ \mathtt{do} $
    
    $\mathtt{14}\quad\quad\quad  \mathtt{if}\ B'_{rj}=1\ \mathtt{then} $

    $\mathtt{15}\quad\quad\quad\quad B'\leftarrow [rp]B'[pr];$

    $\mathtt{16}\quad\quad\quad\quad A\leftarrow A[pr];$

    $\mathtt{17}\quad\quad$/* Step b : eliminating the remaining 1's on line $p$ and column $p$ */
    
    $\mathtt{18}\quad\quad\mathtt{for}\ c=j + 1\ \mathtt{to}\ n-1\ \mathtt{do} $
    
    $\mathtt{19}\quad\quad\quad  \mathtt{if}\ B'_{pc}=1\ \mathtt{then} $

    $\mathtt{20}\quad\quad\quad\quad B'\leftarrow [cj]B'[jc];$

    $\mathtt{21}\quad\quad\quad\quad A\leftarrow A[jc];$

    $\mathtt{22}\quad \mathtt{return} (B',A);$
    
    \caption{ Algorithm : reduction of a matrix in $\BG$\label{B-red-algo}}
  \end{figure}

\begin{defi}
We say that a matrix $B\in\BG$ is \textit{reduced} when each column and each line of $B$ contains at most one non-zero entry, \textit{i.e.} $Z_{B}$ corresponds to a $\cz$ circuit of depth 1.
  \end{defi}

\begin{lem}\label{B-red}
  For any $B$ $\in\BG$, there exists an upper triangular matrix $A\in\GL$ and a reduced matrix $B_{\text{red}}\in\BG$ such that $B_{\text{red}}=A^{\T}BA$.
\end{lem}

\begin{proof}
  $B$ is the matrix of an alternating bilinear form with respect to the canonical basis $(e_i)_{i=0\dots n-1}$.  The equality $B_{\text{red}}=A^{\T}BA$ is just the classical change of basis formula, where $A$ is the matrix of the new basis. A construction of $A$ and $B_{\text{red}}$ is given by the algorithm in Figure \ref{B-red-algo}. We use Gaussian elimination (\textit{i.e.}
  multiplication by transvection matrices, \textit{cf.} Proposition \ref{tij-mult}) on columns and rows of the matrix $B$ to construct step by step the matrices $A$ and $B_{\text{red}}$ (see Example \ref{B-red-example}).
\end{proof}

\begin{example} Computing a reduced matrix in $\BG[7]$. \label{B-red-example}\medskip

  $\mathtt{INPUT} : B=\begin{bmatrix}
    0&0&0&1&0&1&0\\
    0&0&1&1&0&0&1\\
    0&1&0&0&1&1&0\\
    1&1&0&0&1&0&0\\
    0&0&1&1&0&0&0\\
    1&0&1&0&0&0&1\\
    0&1&0&0&0&1&0\\
  \end{bmatrix}$
\medskip

  $\bf j=0 $

  Choosing pivot : $p\leftarrow 3$; $\mathrm{pivot}[3]\leftarrow \mathtt{true};$

  Step a :
$[53]B[35]=
\begin{bmatrix}
  \bf{0}&\bf0&\bf0&\bf1&\bf0&\bf0&\bf0\\
  \bf0&0&1&1&0&1&1\\
  \bf0&1&0&0&1&1&0\\
  \bf1&1&0&0&1&0&0\\
  \bf0&0&1&1&0&1&0\\
  \bf0&1&1&0&1&0&1\\
  \bf0&1&0&0&0&1&0\\
\end{bmatrix}$

Step b : 
$[40][10][53]B[35][01][04]=
\begin{bmatrix}
  \bf0&\bf0&\bf0&\bf1&\bf0&\bf0&\bf0\\
  \bf0&0&1&\bf0&0&1&1\\
  \bf0&1&0&\bf0&1&1&0\\
  \bf1&\bf0&\bf0&\bf0&\bf0&\bf0&\bf0\\
  \bf0&0&1&\bf0&0&1&0\\
  \bf0&1&1&\bf0&1&0&1\\
  \bf0&1&0&\bf0&0&1&0\\
\end{bmatrix}$
\medskip

$\bf j=1$

Choosing pivot : $p\leftarrow 2$; $\mathrm{pivot}[2]\leftarrow \mathtt{true}$;

Step a :
$[62][52][40][10][53]B[35][01][04][25][26]=
\begin{bmatrix}
 \bf0&\bf0&\bf0&\bf1&\bf0&\bf0&\bf0\\
  \bf0&\bf0&\bf1&\bf0&\bf0&\bf0&\bf0\\
  \bf0&\bf1&0&\bf0&1&1&0\\
  \bf1&\bf0&\bf0&\bf0&\bf0&\bf0&\bf0\\
  \bf0&\bf0&1&\bf0&0&0&1\\
  \bf0&\bf0&1&\bf0&0&0&0\\
  \bf0&\bf0&0&\bf0&1&0&0\\
\end{bmatrix}$

Step b : 
$[51][41][62][52][40][10][53]B[35][01][04][25][26][14][15]=
\begin{bmatrix}
 \bf0&\bf0&\bf0&\bf1&\bf0&\bf0&\bf0\\
  \bf0&\bf0&\bf1&\bf0&\bf0&\bf0&\bf0\\
  \bf0&\bf1&\bf0&\bf0&\bf0&\bf0&\bf0\\
  \bf1&\bf0&\bf0&\bf0&\bf0&\bf0&\bf0\\
  \bf0&\bf0&\bf0&\bf0&0&0&1\\
  \bf0&\bf0&\bf0&\bf0&0&0&0\\
  \bf0&\bf0&\bf0&\bf0&1&0&0\\
\end{bmatrix}$
\medskip

$\bf j=2$ : $\mathrm{pivot}[2]=\mathtt{true}$, so $\mathtt{continue}$ 
\medskip

$\bf j=3$ : $\mathrm{pivot}[3]=\mathtt{true}$, so  $\mathtt{continue}$ 
\medskip

$\bf j=4$

Choosing pivot :  $p\leftarrow 6$; $\mathrm{pivot}[6]\leftarrow \mathtt{true}$;

Step a : $B'$ remains unchanged

Step b : $B'$ remains unchanged
\medskip

$\bf j=5$ :  null column, so $\mathtt{continue}$
\medskip

$\mathtt{return} (B',A)$, where
\medskip

$B'=
\begin{bmatrix}
 \bf0&\bf0&\bf0&\bf1&\bf0&\bf0&\bf0\\
  \bf0&\bf0&\bf1&\bf0&\bf0&\bf0&\bf0\\
  \bf0&\bf1&\bf0&\bf0&\bf0&\bf0&\bf0\\
  \bf1&\bf0&\bf0&\bf0&\bf0&\bf0&\bf0\\
  \bf0&\bf0&\bf0&\bf0&\bf0&\bf0&\bf1\\
  \bf0&\bf0&\bf0&\bf0&\bf0&\bf0&\bf0\\
  \bf0&\bf0&\bf0&\bf0&\bf1&\bf0&\bf0\\
\end{bmatrix}$,
$A=
[35][01][04][25][26][14][15]=
\begin{bmatrix}
  1&1&0&0&0&1&0\\
  0&1&0&0&1&1&0\\
  0&0&1&0&0&1&1\\
  0&0&0&1&0&1&0\\
  0&0&0&0&1&0&0\\
  0&0&0&0&0&1&0\\
  0&0&0&0&0&0&1\\
\end{bmatrix}$.

\end{example}

\begin{theo} \textbf{$\cz$-reduced normal form}
  
  Any  stabilizer circuit  $C=\prod_{k=1}^{\ell} M_k $, where $M_k\in\{\PP_i,\HH_i,X_{[ij]}\}$ can be written in the form
  \begin{equation}
    \HH_{\alpha}\PP_dX_{A_{1}}Z_{D_\text{red}}\HH_{\omega}\ee^{\ii\phi}\XX_{u}\ZZ_{v} X_{A_{2}}Z_{B_{\text{red}}}X_{A_{3}}\PP_b\ ,\label{red-nf}
  \end{equation}
  where     $\alpha,\omega,d,b,u,v\in \F^n$, $D_{\text{red}}$ and $B_{\text{red}}$ are reduced matrices in $\BG$,  $A_1,A_2,A_3\in\GL$, $A_1$ and $A_3$ are upper triangular matrices and $\phi\in\{k\frac{\pi}{4},k\in\Z\}$.
   \end{theo}

\begin{proof} The proof consists in the description of an algorithm that takes a stabilizer circuit C as input and return 
  an equivalent circuit in the $\cz$-reduced normal form $\HH_{\alpha}\PP_dX_{A_{1}}Z_{D_\text{red}}\HH_{\omega}\ee^{\ii\phi}\XX_{u}\ZZ_{v} X_{A_{2}}Z_{B_{\text{red}}}X_{A_{3}}\PP_b$.
  This algorithm is called the $\ctorednf$ algorithm. We start by applying the $\ctonf$ algorithm without the final simplification (subroutine \texttt{simplify}) to the input and we obtain a circuit $C=\HH_a\PP_dZ_D\hh\ee^{\ii\phi}\XX_{u}\ZZ_{v}\PP_bZ_BX_A$.
  
  Let $\ZZ_{w'}\PP_{b'}Z_{B'}=X_{A}^{-1}\cdot\PP_bZ_B\cdot X_A$, then
  
  $C=\HH_a\PP_dZ_D\hh\ee^{\ii\phi}\XX_{u}\ZZ_{v}X_A\ZZ_{w'}\PP_{b'}Z_{B'}\stackrel{\ref{conj-xz-XA}}{=}
  \HH_a\PP_dZ_D\hh\ee^{\ii\phi}\XX_{u}\ZZ_{v\oplus A^{-\T}w'}X_AZ_{B'}\PP_{b'}$.

  Let $D_\text{red}$ (resp. $B_{\text{red}})$ be a reduction of $D$ (resp. $B'$). Let $D_\text{red}=A'^{\T} D A'$ and
$B_{\text{red}}=A''^{\T} B' A''$, where $A',A''\in\GL$. Since \ref{conj-ZB-XA} one has $X_{A'}Z_{D_\text{red}}X_{A'}^{-1}=\ZZ_{q_{D_{\text{red}}}(A'^{-1})}Z_{D}$ and
$X_{A''}Z_{B_{\text{red}}}X_{A''}^{-1}=\ZZ_{q_{B_{\text{red}}}(A''^{-1})}Z_{B'}$, hence

$C=\HH_a\PP_d  X_{A'}Z_{D_\text{red}}X_{A'}^{-1}\ZZ_{q_{D_{\text{red}}}(A'^{-1})}   \hh\ee^{\ii\phi}\XX_{u}\ZZ_{v\oplus A^{-\T}w'}
X_A \ZZ_{q_{B_{\text{red}}}(A''^{-1})}X_{A''}Z_{B_{\text{red}}}X_{A''}^{-1}\PP_{b'}$.

After merging the $\ZZ$ gates with $\XX_{u}\ZZ_{v\oplus A^{-\T}w'}$, we obtain :

$C\stackrel{\eqref{conj-z-h},\eqref{conj-xz-XA}}{=}\HH_a\PP_d  X_{A'}Z_{D_\text{red}}X_{A'}^{-1}\hh\ee^{\ii\phi}\XX_{u'}\ZZ_{v'}X_A X_{A''}Z_{B_{\text{red}}}X_{A''}^{-1}\PP_{b'}$,

where $u'=u\oplus q_{D_{\text{red}}}(A''^{-1})$ and $v'=v\oplus A^{-\T}w'\oplus A^{-T}q_{B_{\text{red}}}(A''^{-1})$. Hence

$C=\HH_a\PP_d  X_{A'}Z_{D_\text{red}} \hh\cdot X_{A'^{\T}}\ee^{\ii\phi}\XX_{u'}\ZZ_{v'}X_{A'^{-\T}}\cdot X_{A'^{\T}}X_A X_{A''}Z_{B_{\text{red}}}X_{A''^{-1}}\PP_{b'}$,

$C\stackrel{\ref{conj-xz-XA}}{=}\HH_a\PP_d  X_{A'}Z_{D_\text{red}} \hh\ee^{\ii\phi}\XX_{A'^{\T}u'}\ZZ_{A'^{-1}v'}X_{A'^{\T}AA''}Z_{B_{\text{red}}}X_{A''^{-1}}\PP_{b'}$.

Finally, we build $\HH_{\alpha}$ and $\HH_{\omega}$ after simplifying  $\HH_a$ with $\hh$ as follows. Let $C_i$ (resp. $L_i$) be the column $i$ (resp. the line $i$) of $A'$ and $(e_i)_{i=0\dots n-1}$ be the canonical basis of $\F^n$. Let $Q$ be the set of qubits involved in the subcircuit $\PP_d  X_{A'}Z_{D_\text{red}}$, \textit{i.e}
$Q=\{i\mid d_i=1\}\cup\{i\mid \exists j, D_{\text{red\, }ij}=1\}\cup\{i\mid L_i \neq e_i^{T} \text{ or } C_i \neq e_i\}$. If $a_i=1$ and $i\neq Q$ then $\alpha_i=\omega_i=0$, otherwise $\alpha_i=a_i$ and $\omega_i=1$. 
\end{proof}

Neglecting the global phase $\phi$, the $\cz$-reduced normal form \ref{red-nf} corresponds to a quantum circuit in the form
P-CX-CZ-CX-Z-X-H-CZ-CX-P-H, where each $\cz$ gate layer contains less than
$\frac{n}{2}$ $\cz$ gates. Using the algorithm by Patel \textit{et al.} \cite{2004PMH} to decompose the three $\GL$-matrices $A_i$ in transvections yields a circuit that contains
$O\left(\frac{n^2}{\log(n)}\right)$ two-qubit gates, which is better, in theory, than the $O(n^2)$ two-qubit gate count of the first normal form. Whether or not this second normal form brings a real practical advantage compared to the first one depends, however, on the value of the constant $\mathrm{c}$ such that the two-qubit gate count remains lower than $\mathrm{c}\frac{n^2}{\log(n)}$. In Section \ref{experience} we propose a value for $\mathrm{c}$.
We summarize the algorithms presented in this paper in Table \ref{algos}.

\begin{table}[h]
  \begin{center}
$\begin{array}{|lll|}\hline
   \text{Input circuit of\dots}&\text{Algorithm}&\text{Normal Form}\\
   \hline
  \swap,\cz&\xrightarrow{\ctozs}&Z_BS_{\sigma}\\
  \swap,\cnot,\cz,\PP,\ZZ&\xrightarrow{\ctopzx}&\ZZ_v\PP_bZ_BX_A\\
   \swap,\cnot,\cz,\PP,\ZZ,\XX,\YY,\HH&\xrightarrow{\ctonf}&\HH_{\alpha}\PP_dZ_D\HH_{\omega}\ee^{\ii\phi}\XX_{u}\ZZ_{v}\PP_bZ_BX_A\\
   \swap,\cnot,\cz,\PP,\ZZ,\XX,\YY,\HH&\xrightarrow{\ctorednf}&\HH_{\alpha}\PP_dX_{A_{1}}Z_{D_\text{red}}\HH_{\omega}\ee^{\ii\phi}\XX_{u}\ZZ_{v}
                                                                                    X_{A_{2}}Z_{B_{\text{red}}}X_{A_{3}}\PP_b\\
   \hline
 \end{array}$
 \end{center}
 \caption{Algorithms and normal forms\label{algos}}
 \end{table}

\section{Application to stabilizer states and graph states\label{application}}

A stabilizer state $\ket{S}$ for a $n$-qubit register can be written in the form $\ket{S}=C\ket{0}^{\otimes n}$ where $C$ is a stabilizer circuit
\cite[Theorem 1]{2004AG}. A graph state $\ket{G}$ is a special case of a stabilizer state that can be written in the form
$\ket{G}=Z_B\ket{+}^{\otimes n}=Z_B\hh\ket{0}^{\otimes n}$,
where $\ket{+}=\HH\ket{0}=\frac{1}{\sqrt{2}}(\ket{0}+\ket{1})$ is the eigenvector corresponding to the eigenvalue 1 of the \PauliX gate and $\hh=\HH^{\otimes n}$\cite{2006HD}.
The graph $G$ associated to the graph state $\ket{G}$ is the graph of order $n$ whose vertices are labeled by the $n$ qubits and whose set of edges is $B=\{\{i,j\}\mid B_{ij}=1\}$.
Let $C=\prod_{k=1}^{\ell} M_k$ be the product of $\ell$ Clifford gates $M_k\in\{\PP_i,\HH_i,X_{[ij]}\}$. Applying the $\ctonf$ algorithm to $C$ without performing the final simplification (subroutine \texttt{simplify}) yields $C=\HH_a\PP_dZ_D\hh\ee^{\ii\phi}\XX_{u}\ZZ_{v}\PP_bZ_BX_A\stackrel{\eqref{conj-z-h}}{=}
\ee^{\ii\phi}\HH_a\ZZ_{u}\PP_dZ_D\hh\ZZ_{v}\PP_bZ_BX_A$. Since the subcircuit $\ZZ_{v}\PP_bZ_BX_A$  has no effect on the ket $\ket{0}^{\otimes n}$, one has (neglecting the global phase $\phi$)  
$\ket{S}=\HH_a\ZZ_{u}\PP_dZ_D\hh\ket{0}^{\otimes n}=\HH_a\ZZ_{u}\PP_d\ket{G}$, where $\ket{G}$ is the graph state $Z_D\hh\ket{0}^{\otimes n}$. So, using the $\ctonf$ algorithm, we obtain a proof of the well known statement that any stabilizer state $\ket{S}$ is equivalent to a graph state $\ket{G}$ under local Clifford operations : there exists a stabilizer circuit $C'$ composed of local Clifford gates only (\textit{i.e.} phase and Hadamard gates) such that  $\ket{S}=C'\ket{G}$  (see \cite[theorem 1]{2004VDN}). Moreover our method provides straightforwardly  a possible value for the circuit $C'$ and the graph $G$.
\begin{prop}\label{stab-graph}
  For any stabilizer state $\ket{S}$, there exists a graph state $\ket{G}$ and 3 vectors $u,v,w$ in $\F^n$ such that
\begin{equation}
  \ket{S}=\HH_u\ZZ_v\PP_w\ket{G}.
  \end{equation}
\end{prop}

 Applying the $\ctorednf$ algorithm to the special case of a  $\cz$-gate circuit $Z_B$ ($B\in\BG$) yields
 \begin{equation}
   Z_B=\ZZ_vX_AZ_{B_{\text{red}}}X_{A^{-1}},\label{ZBred}
 \end{equation}
 where $A$ is an upper triangular matrix in $\GL$ and $B_{\text{red}}$ is a reduced matrix in $\BG$. 
Using \eqref{ZBred}, we write the graph state $\ket{G}=Z_B\hh\ket{0}^{\otimes n}$ in the form
$\ket{G}=\ZZ_vX_AZ_{B_{\text{red}}}X_{A^{-1}}\hh\ket{0}^{\otimes n}$. Since \eqref{conj-XA-h}, we obtain
$\ket{G}=\ZZ_vX_AZ_{B_{\text{red}}}\hh X_{A^{\T}}\ket{0}^{\otimes n}$. As a $\cnot$ circuit has no effect on the ket $\ket{0}^{\otimes n}$,
one has $\ket{G}=\ZZ_vX_AZ_{B_{\text{red}}}\hh\ket{0}^{\otimes n}$. Hence the proposition that follows.
\begin{prop}\label{graph-state}
  Any graph state $\ket{G}$ can be written in the form
\begin{equation}
  \ket{G}=\ZZ_vX_AZ_{B_{\text{red}}}\ket{+}^{\otimes n},
\end{equation}
where $v\in\F^n$, $A\in\GL$ is an upper triangular matrix, $Z_{B_{\text{red}}}$ is a $\cz$ circuit of depth 1. 
\end{prop}



    \begin{example}
      The entangled state of a 5-qubit register $\ket{\mathtt{GHZ}}_5$ can be easily implemented as a stabilizer state :
      $\ket{\mathtt{GHZ}}_5=X_{[43]}X_{[32]}X_{[21]}X_{[10]}\HH_0\ket{0}^{\otimes n}$. Applying the $\ctonf$ algorithm on the input $C=X_{[43]}X_{[32]}X_{[21]}X_{[10]}\HH_0$  yields the following normal form :
      $C=\HH_1\HH_2\HH_3\HH_4Z_{01}Z_{02}Z_{03}Z_{04}\HH_0\HH_1\HH_2\HH_3\HH_4X_{[21]}X_{[31]}X_{[41]}X_{[43]}X_{[32]}$ (the reader can use our C implementation of the algorithm, see next section).
      So $\ket{\mathtt{GHZ}}_5=\HH_1\HH_2\HH_3\HH_4Z_{01}Z_{02}Z_{03}Z_{04}\hh\ket{0}^{\otimes n}$, hence $\ket{\mathtt{GHZ}}_5$ is Local Clifford equivalent to the star graph state 
      $\ket{G}=Z_{\{\{0,1\},\{0,2\},\{0,3\},\{0,4\}\}}\ket{+}^{\otimes n}$ (Proposition \ref{stab-graph}). Then we use the $\ctorednf$ algorithm on the input $Z_{\{\{0,1\},\{0,2\},\{0,3\},\{0,4\}\}}$ and we obtain
      
      $Z_{\{\{0,1\},\{0,2\},\{0,3\},\{0,4\}\}}=X_{[14]}X_{[13]}X_{[12]}Z_{01}X_{[14]}X_{[13]}X_{[12]}$, hence $\ket{G}=X_{[14]}X_{[13]}X_{[12]}Z_{01}\ket{+}^{\otimes n}$ (Proposition \ref{graph-state}).
      \end{example}

 \section{Empirical validation\label{experience}}
 We implement the algorithms presented in this paper in the C language with a text-based user interface. The source code is available at :
 
 \url{https://github.com/marcbataille/stabilizer-circuits-normal-forms}.
 
 To decompose the $\GL$-matrices of the normal forms, we use  the algorithm by Patel \textit{et al.} with a value of $\lceil\log_2(n)/2\rceil$ for the parameter $m$ (see  \cite{2004PMH}). Our program is fast and can write in normal form a 40000 gates random stabilizer circuit for a 200-qubit register in a few seconds using a basic laptop. The \textit{manual} mode of the program reproduces the induction steps of the $\ctonf$ algorithm, while the \textit{statistics} mode provides convenient tools to show empirically the interest (and the limits) of our normal forms to reduce stabilizer circuits.  A few significant results are presented below and we invite the reader to use our program to obtain his own statistics. We use samples of $100$ random stabilizer circuits and we compute the average length and the average 2-qubit gate count of both normal forms (results are given in percentage of the input).
 The probability of choosing a $\cnot$ gate is $0.8$ while the probability of choosing a phase or a Hadamard gate is 0.1 for each type (these proportions can be easily modified in the program).
 \medskip
 \begin{center}

 \begin{tabular}{|c|c|c|c||c|c|c|}
   \hline
   \multicolumn{7}{|c|}{Input circuits of length $n^2/2$}\\ \hline  \hline
   &\multicolumn{3}{c||}{All gate count}&\multicolumn{3}{c|}{2-qubit gate count}\\ \hline
   n &Input& N.F.&$\cz$-red. N.F.&Input& N.F.&$\cz$-red. N.F. \\    \hline
   10&100\%&165\%&172\%&100\%&153\%&157\%  \\    \hline
   20&100\%&199\%&188\%&100\%&208\%&193\%  \\    \hline
50&100\%&181\%&152\%&100\%&209\%&173\%  \\    \hline
   100&100\%&166\%&129\%&100\%&200\%&153\%  \\    \hline
   200&100\%&157\%&113\%&100\%&193\%&137\%  \\    \hline
   300&100\%&152\%&103\%&100\%&188\%&126\%  \\    \hline
   \end{tabular}

   \medskip

 \begin{tabular}{|c|c|c|c||c|c|c|}
   \hline
   \multicolumn{7}{|c|}{Input circuits of length $n^2$}\\ \hline  \hline
   &\multicolumn{3}{c||}{All gate count}&\multicolumn{3}{c|}{2-qubit gate count}\\ \hline
   n &Input& N.F.&$\cz$-red. N.F.&Input& N.F.&$\cz$-red. N.F. \\    \hline
   10&100\%&114\%&113\%&100\%&104\%&101\%  \\    \hline
   20&100\%&104\%&96\%&100\%&109\%&98\%  \\    \hline
   50&100\%&90\%&76\%&100\%&105\%&87\%  \\    \hline
   100&100\%&83\%&64\%&100\%&100\%&76\%  \\    \hline
   200&100\%&79\%&56\%&100\%&96\%&68\%  \\    \hline
   300&100\%&76\%&52\%&100\%&94\%&63\%  \\    \hline
  \end{tabular}

  \medskip

 \begin{tabular}{|c|c|c|c||c|c|c|}
   \hline
   \multicolumn{7}{|c|}{Input circuits of length $2n^2$}\\ \hline  \hline
    &\multicolumn{3}{c||}{All gate count}&\multicolumn{3}{c|}{2-qubit gate count}\\ \hline
   n &Input& N.F.&$\cz$-red. N.F.&Input& N.F.&$\cz$-red. N.F. \\    \hline
   10&100\%&61\%&60\%&100\%&55\%&54\%  \\    \hline
   20&100\%&52\%&48\%&100\%&54\%&49\%  \\    \hline
   50&100\%&45\%&38\%&100\%&52\%&43\%  \\    \hline
   100&100\%&42\%&32\%&100\%&50\%&38\%  \\    \hline
   200&100\%&39\%&28\%&100\%&48\%&34\%  \\    \hline
   300&100\%&38\%&26\%&100\%&47\%&32\%  \\    \hline
 \end{tabular}
\end{center}
 \medskip

 The experimental results clearly show that the $\ctorednf$ algorithm is better than the $\ctonf$ algorithm in terms of circuit reduction.
 From $20$ qubits,  we observe that the $\ctorednf$ algorithm is able to reduce the 2-qubit gate count, not only in average, but for all circuits of length $n^2$ tested in the experiment (second table).
 In fact, this observation is just a particular case of the conjecture that follows.
\begin{conj}
 Using the $\ctorednf$ algorithm, any $n$-qubit stabilizer circuit can be transformed  into an equivalent circuit that contains less than
 $\dfrac{3n^2}{\log(n)}$ two-qubit gates. 
 \end{conj}
 
 We check this conjecture up to 600 qubits on about one thousand random circuits of different lengths.\medskip

We focus now on the particular case of $\cz$ circuits and we evaluate to what extent Identity \ref{ZBred} is helpful to reduce the 2-qubit gate count in an implementation of a $\cz$ circuit (corresponding to a matrix $Z_B$) that is based on  \PauliZ, $\cnot$ and $\cz$ gates.
Again we use the algorithm by Patel \textit{et al.} with a value of $\lceil\log_2(n)/2\rceil$ for the parameter $m$ in order to decompose the matrix $A$ and we obtain an equivalent circuit to $Z_B$. It appears that such an implementation of a $\cz$ circuit is interesting in some cases from about 20 qubits. For instance a 20-qubit $\cz$ circuit corresponding to a matrix $B$ with all non-diagonal entries equal to 1 (complete graph $K_{20}$, 190 $\cz$ gates) is reduced to an equivalent circuit of 164 two-qubit gates
($\cnot$ + $\cz$). For a high number of qubits, the reduction is more interesting. The table below shows results for $200$-qubit $\cz$ circuits. We use samples of 100 random circuits of the same length. All input circuits contain distinct $\cz$ gates and we count only the 2-qubit gates ($\cnot$ + $\cz$) of the output circuit.
\medskip

\begin{tabular}{|c|c|c|c|c|c|c|c|c|}
  \hline
  \multicolumn{9}{|c|}{200-qubit $\cz$ circuits}\\ \hline  \hline
  \multicolumn{2}{|c|}{Input length} & 5000&10000&12000&14000&16000&18000&19000\\ \hline \hline
  \multirow{3}{*}{Output}& Max. &216.2 \%&108.9 \%&91.0 \%&77.9 \%&68.1 \%&60.1 \%&55.7 \%\\ 
  & Avg.&214.8 \%&108.4 \%&90.4 \%&77.4 \%&67.6 \%&59.4 \%&54.2 \%\\ 
  & Min.&213.5 \%&107.8 \%&89.7 \%&76.9 \%&67.2 \%&58.2 \%&52.7 \%\\ \hline
\end{tabular}\medskip

We end this section by a simple example which highlights the usefulness that can have  Proposition \ref{graph-state}, and more generally the $\ctorednf$ algorithm, to reduce the gate count
of stabilizer circuits implemented in a real-life quantum machine. We implement in the publicly available 5-qubit ibmq\_belem device
(\url{https://quantum-computing.ibm.com/}) the graph state
\begin{equation}
  \ket{K_5}=Z_B\ket{+}^{\otimes n},\label{gfull}
\end{equation}
where $B$ is the adjacency matrix of the complete graph $K_5$.

This graph-state is Local Clifford equivalent to $\ket{\mathtt{GHZ}}_5$, just like the star graph state
$Z_{\{\{0,1\},\{0,2\},\{0,3\},\{0,4\}\}}\ket{+}^{\otimes n}$ mentioned in the previous section (see \textit{e.g.} \cite[Section 4.1]{2006HD}). 
 From Proposition \ref{graph-state}, we obtain (using our computer program) :
 \begin{equation}
   \ket{K_5}=\ZZ_2\ZZ_3X_{[34]}X_{[23]}X_{[12]}X_{[02]}X_{[24]}X_{[23]}Z_{01}Z_{23}\ket{+}^{\otimes n}.\label{gfull-red}
 \end{equation}
 Observe that the form \ref{gfull-red} contains only 8 two-qubit gates comparing to the 10 $\cz$ gates of the form \ref{gfull}. The reduction is much more impressive when
 we consider the circuit that is really implemented in the quantum computer using only native gates. Indeed,  the $\cz$ gate
 is not native in the IBM quantum devices and is simulated thanks to Identity \eqref{zijxij}. The Hadamard gate is implemented from the $R_z(\pi/2)$ and $\sqrt{X}$ gates, since
 $\HH=\ee^{\ii\frac{\pi}{4}}R_z(\pi/2)\sqrt{X}R_z(\pi/2)$, where $\sqrt{X}=\dfrac12\begin{bmatrix}1+\ii&1-\ii\\1-\ii&1+\ii\end{bmatrix}$ and
   $R_z(\theta)=\begin{bmatrix}\ee^{-\ii\frac{\theta}{2}}&0\\
     0&\ee^{\ii\frac{\theta}{2}}
   \end{bmatrix}$. Moreover full connectivity is not achieved and the direct connections allowed between two qubits are given by a graph. The graph of the 5-qubit ibmq\_belem device is $\{\{1,0\},\{1,2\},\{1,3\},\{3,4\}\}$.
   So, to implement a $\cnot$ gate between qubits without direct connection (\textit{e.g.} qubits 2 and 3), it is necessary to simulate it from the native $\cnot$ gates using methods we described on a previous work \cite[Section 3]{2020B}. Due to its similarities to the compilation process in classical computing, the rewriting process that transforms an input circuit with measurements into a native gate circuit giving statistically the same measurement results, is called \textit{transpilation} on the IBM quantum computing website. We remark that the transpiled circuit corresponding to the form \ref{gfull} contains 43 $\cnot$ gates and 69 single qubit gates which is far more than the 16 $\cnot$'s and the 17 single gates of the transpiled circuit corresponding to the form \ref{gfull-red} (see circuits below).\bigskip
   
   $\ket{K_5}=Z_B\ket{+}^{\otimes n}$ :
   
   \includegraphics[scale=0.4, viewport=0cm 0cm 30.5cm 11cm, clip=true]{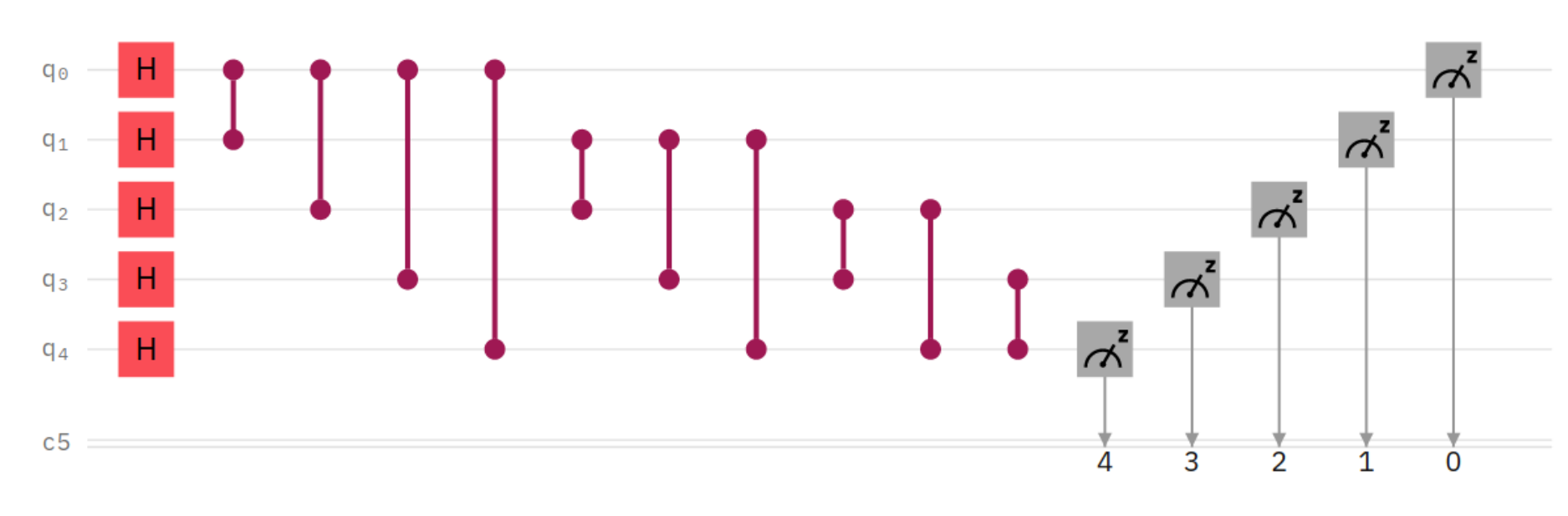}\raisebox{2.1cm}{$\qquad\xrightarrow{transpilation}$}
   
   \includegraphics[scale=0.4,viewport=0cm 0cm 33.9cm 11cm, clip=true]{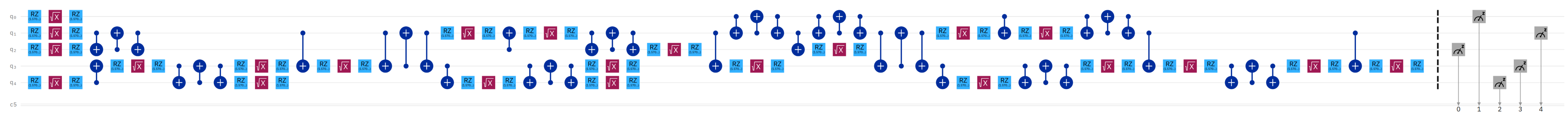}\raisebox{2.45cm}{$\quad\dots$}

   \raisebox{2.45cm}{$\dots\quad$}\includegraphics[scale=0.4,viewport=33.9cm 0cm 67.3cm 11cm, clip=true]{circuit_ghz5_full_trans}\raisebox{2.45cm}{$\quad\dots$}
   
   \raisebox{2.45cm}{$\dots\quad$}\includegraphics[scale=0.4,viewport=67.3cm 0cm 100.9cm 11cm, clip=true]{circuit_ghz5_full_trans}\raisebox{2.45cm}{$\quad\dots$}

   \raisebox{2.45cm}{$\dots\quad$}\includegraphics[scale=0.4,viewport=100.9cm 0cm 133cm 11cm, clip=true]{circuit_ghz5_full_trans}
   
   $\ket{K_5}=\ZZ_2\ZZ_3X_{[34]}X_{[23]}X_{[12]}X_{[02]}X_{[24]}X_{[23]}Z_{01}Z_{23}\ket{+}^{\otimes n}$ :

   \includegraphics[scale=0.4, viewport=0cm 0cm 29cm 11cm, clip=true]{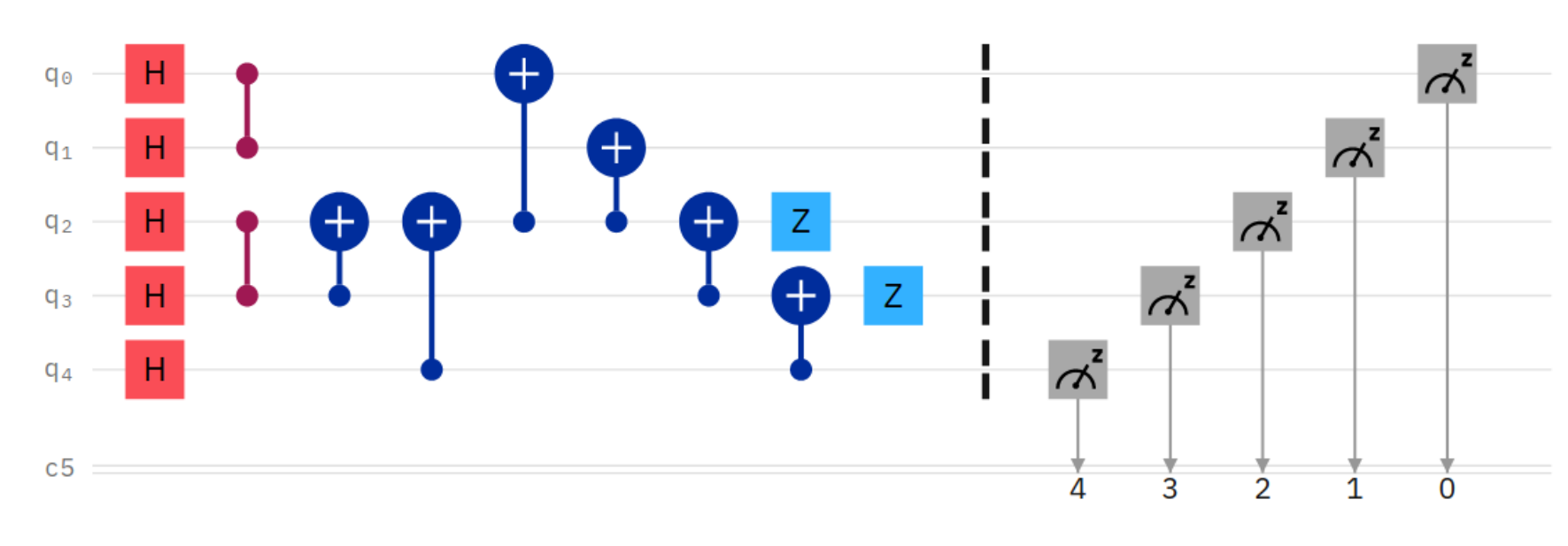}\raisebox{2.1 cm}{$\qquad\xrightarrow{transpilation}$}
   
   \includegraphics[scale=0.4, viewport=0cm 0cm 34cm 11cm, clip=true]{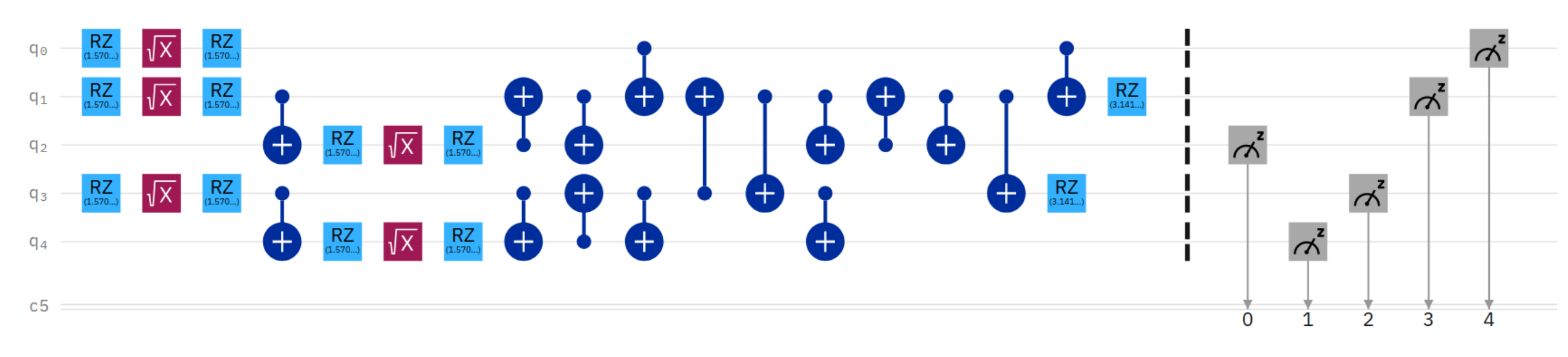}\raisebox{2.45cm}{$\quad\dots$}

   \includegraphics[scale=0.4, viewport=34cm 0cm 50cm 11cm, clip=true]{circuit_ghz5_full_czred_trans}
   
\section{Conclusion and  future work}
Gottesman proved in his Phd thesis that any unitary matrix in the Clifford group is uniquely defined, up to a global phase, by its action by conjugation on the Pauli gates $\XX_i$ and $\ZZ_i$ \cite[pp.41,42]{1997G}. This central statement of Gottesman stabilizer formalism can be used to compute normal forms for $n$-qubit stabilizer circuits via the symplectic group over $\F$ in dimension $2n$ (\textit{e.g.} \cite{2004AG,2018MR}).
In this paper we showed that it is possible to compute normal forms in polynomial time without using this formalism. The reader who is used to work with the symplectic group will notice that our induction process can also be applied inside this group, giving rise to a decomposition of type $M_{\sigma}\begin{bmatrix}I_n&0\\D&I_n\end{bmatrix}  \begin{bmatrix}I_n&B\\0&I_n\end{bmatrix}\begin{bmatrix}A&0\\0&A^{-\T}\end{bmatrix}$ for the symplectic matrix associated to the form \ref{nf},
where $B$ (resp. $D$) is a symmetric matrix corresponding to $\PP_bZ_B$ (resp. $\PP_dZ_D$), $A\in\GL$ is the invertible matrix corresponding to the $\cnot$ subcircuit $X_A$, and $M_{\sigma}$ is a permutation matrix corresponding to a circuit of Hadamard gates.
\medskip

Along with this article we also provided a C implementation of all our algorithms as well as a few basic statistics that help to understand how normal forms can be helpful to reduce the gate count of stabilizer circuits. We applied our results to graph states and we checked experimentally the practical utility of normal forms to implement this type of stabilizer state on real-life quantum computers.\medskip

In \cite{2020B} (resp. \cite{2019BL}), we studied the emergence of entanglement in particular stabilizer circuits, namely $\cnot$ (resp.  $\cz$ plus $\swap$) circuits. Our goal was to find out what kind of entanglement can be created when those simple  circuits act on a fully factorized state. It would be interesting to understand how well normal forms could help to extend these studies to any stabilizer circuit. We leave this topic for future work.

\bibliographystyle{plain}
\bibliography{biblio_MB}

\end{document}